\newtheorem{theorem}{Theorem}
\newtheorem{lemma}[theorem]{Lemma}
\newtheorem{corollary}[theorem]{Corollary}
\definecolor{LightCyan}{rgb}{0.88,1,1}
\newcommand{\cO}[1]{\mathcal{O}\left(#1\right)}
\newcommand{\gap}{{\delta}}
\newcommand{\graph}{{\cal G}}
\begin{document}

%\title{GROTESQUE: Noisy Group Testing (Quick and Efficient)\footnote{GROTESQUE is short for {\bf GRO}up {\bf TES}ting {\bf QU}ick and {\bf E}fficient.}}
\title{GROTESQUE: Noisy Group Testing (Quick and Efficient)\thanks{GROTESQUE is short for {\bf GRO}up {\bf TES}ting {\bf QU}ick and {\bf E}fficient. }}

\author{\IEEEauthorblockN{Sheng Cai$^\ast$, Mohammad Jahangoshahi$^+$, Mayank Bakshi$^\ast$, and Sidharth Jaggi$^\ast$}\\
\IEEEauthorblockA{The Chinese University of Hong Kong$^\ast$, Sharif University of Technology$^+$}
}
%\author{TeX Production\thanks{Composition Department, Society
%        for Industrial and Applied Mathematics, 3600 University
%        City Science Center, Philadelphia, Pennsylvania,
%        19104-2688 ({\tt tex@siam.org}).}
%        \and Various A.~U. Thors\thanks{Various Affiliations,
%        supported by various foundation grants.}}
\maketitle

\begin{abstract}
Group-testing refers to the problem of identifying (with high probability) a (small) subset of $D$ defectives from a (large) set of $N$ items via a ``small'' number of ``pooled'' tests ({\it i.e.}, tests that have a positive outcome if at least one of the items being tested in the pool is defective, else have a negative outcome). For ease of presentation in this work we focus on the regime when $D = \cO{N^{1-\gap}}$ for some $\gap > 0$.
The tests may be {\it noiseless} or {\it noisy}, and the testing procedure may be adaptive (the pool defining a test may depend on the outcome of a previous test), or non-adaptive (each test is performed independent of the outcome of other tests).
A rich body of literature demonstrates that $\Theta(D\log(N))$ tests are information-theoretically necessary and sufficient for the group-testing problem, and provides algorithms that achieve this performance.
However, it is only recently that reconstruction algorithms with computational complexity that is sub-linear in $N$ have started being investigated (recent work by~\cite{GurI:04,IndN:10, NgoP:11} gave some of the first such algorithms).
In the scenario with adaptive tests with noisy outcomes, we present the first scheme that is simultaneously order-optimal (up to small constant factors) in {\it both} the number of tests and the decoding complexity ($\cO{D\log(N)}$ in both the performance metrics). The total number of {\it stages} of our adaptive algorithm is ``small'' ($\cO{\log(D)}$). Similarly, in the scenario with non-adaptive tests with noisy outcomes, we present the first scheme that is simultaneously near-optimal in both the number of tests and the decoding complexity (via an algorithm that requires $\cO{D\log(D)\log(N)}$ tests and has a decoding complexity of {${\cal O}(D(\log N+\log^{2}D))$}.
Finally, we present an adaptive algorithm that only requires $2$ stages, and for which both the number of tests and the decoding complexity scale as {${\cal O}(D(\log N+\log^{2}D))$}.
For all three settings the probability of error of our algorithms scales as $\cO{1/(poly(D)}$.
\end{abstract}

%\begin{keywords}
%sign-nonsingular matrix, LU-factorization, indicator
%polynomial
%\end{keywords}
%
%\begin{AMS}
%15A15, 15A09, 15A23
%\end{AMS}

%\pagestyle{myheadings}
%\thispagestyle{plain}
%\markboth{TEX PRODUCTION AND V. A. U. THORS}{SIAM MACRO EXAMPLES}

\section{Introduction}
Let's say a ``large'' number (denoted $N$) of items contains a ``small'' number (denoted $D$, where $D$\footnote{In this work, we assume that the number of defective items is known {\it a priori}. If not, other work ({\it e.g., }\cite{SOBEL:75}) considers non-adaptive algorithms with low query complexity that help estimate $D$.} is assumed to be ``much smaller'' than $N$) of ``defective'' items.
The problem of {\it Group Testing} was first considered by Dorfman in 1943~\cite{Dor:43} as a means of identifying a small number of diseased individuals from a large population via as few ``pooled tests'' as possible. In this scenario, blood from a subset of individuals could be pooled together and tested in one go -- if the test outcome was ``negative'', then the subset {of} individuals tested in that test did not contain a diseased individual, otherwise it was ``positive''.
In the intervening decades a rich literature pertaining to the problem has built up and group testing has found many applications in different areas such as multiple access communication~\cite{GooH:08,Wol:1985}, DNA Library Screening~\cite{NgoD:2000,SchTR:2003,CheD:2008}-- a good survey of some of the algorithms, bounds and applications can be found in the books by Du and Hwang~\cite{DuH:00,DuH:1999}.

\noindent {\bf \underline{Number of tests:}}
A natural information-theoretic lower bound on the number of tests required to identify the set of defectives is $\Omega(D\log(N/D))$. One way of deriving this follows by noting that the number of bits required to even describe a subset of size $D$ from a set of size $N$ equals $\log {{N}\choose{D}} = D\log(N/D)(1+o(1))$, hence at least this many tests (each of which have binary outcomes) are needed.
In this work, for ease of presentation of our results, we focus on the setting where $D = \cO{N^{1-\gap}}$ for some $\gap > 0$ (though our results hold in greater generality). In this regime, the number of required tests scales as $\Omega({D\log(N)})$. Note that this argument also demonstrates that the decoding complexity of any group-testing algorithm scales as $\Omega({D\log(N)})$.\footnote{A slightly more involved information-theoretic argument is required if the group-testing procedure is allowed to fail with ``small'' probability. However, even in this setting, essentially the same lower bounds can be proven (up to small constant multiplicative factors that may depend on the allowed probability of error)  -- {\it e.g.},~\cite{Mal78, ChaCJS:11}.}

There are at least three different performance parameters for which the group-testing problem comes in different flavours, corresponding to whether the tests are noiseless or not, adaptive or not, and the algorithm is required to be zero-error or not.\footnote{Another flavour we do not focus on in this work corresponds to whether the design of the testing procedure is deterministic or randomized. Recent works -- for instance~\cite{PorR:08, Mazu:12} -- provide computationally efficient deterministic designs. In this work, however, we focus on ``Monte Carlo'' type randomized design algorithms.}
\begin{itemize}
\item {\it \underline{Noiseless vs. noisy tests}:} If test outcomes are always positive when tests contain at least one defective item, and always negative otherwise, then they are said to be {\it noiseless} tests. In some settings, however, the test outcomes are ``noisy'' -- {a} common model for this noise ({\it e.g.,}~\cite{MalS:98,ChaCJS:11,AtiS:12}) is when test outcomes are flipped i.i.d.\footnote{A ``worst-case'' noise model wherein {\it arbitrary} {(rather than {\it random})} errors, up to a constant fraction of tests, has also been considered in the literature ({\it e.g.,}~\cite{NgoP:11,AtiS:12}).} via Bernoulli($q$) noise.\footnote{Other types of noise have also been considered in the literature -- another well-analyzed type called ``dilution'' noise ({\it e.g.,}~\cite{Hwa:76}) corresponds to tests with fewer defectives having a higher chance of resulting in a false positive outcome.} It is known in several settings ({\it e.g.,}~\cite{MalS:98,ChaCJS:11,AtiS:12}) that the number of noisy tests required to reconstruct the set of defectives is at most a constant factor greater than the number of noiseless tests required (both requiring $\cO{D\log(N)}$ tests).
This constant factor depends on $q$ proportionally with $1/(1-H(q))$, where $H(q)$ is the binary entropy function. Hence in this work we focus on the more general setting, with noisy measurements.

\item {\it \underline{Adaptive vs. Non-adaptive tests}:} Whether the tests are noiseless and noisy tests, as long as a ``small'' probability of error is allowed for the reconstruction algorithm (Monte-Carlo algorithms), it turns out the number of tests required meets (up to a constant factor that may depend on $q$) the information-theoretic lower-bound of $\cO{D\log(N)}$ {({\it e.g.,}~\cite{BonGV:2005,MarT:2011,CheD:2008,DamM:2012,ChaCJS:11})}. Given this, non-adaptive algorithms are often preferred to adaptive algorithms in applications, since they allow for parallelizable implementation and/or the usage of off-the-shelf hardware. Even among the class of adaptive algorithms, it is preferable to have {\it as few} adaptive stages as possible. In this work we focus on both adaptive and non-adaptive algorithms. In the case of adaptive algorithms, we further also consider the case of adaptive algorithms with just two stages (one round of feedback).

\item \label{set:3} {\it \underline{Zero-error vs. ``small-error'' algorithms}:} Potential goals for group-testing algorithms (in the setting with noiseless tests) is to either {\it always} identify the set of defective items correctly, or to output it correctly ``with high probability''. With adaptive tests, it turns out both these settings require $\Theta{(D\log(N))}$ tests {({\it e.g.,}~\cite{CheD:2008} for the former setting and \cite{DamM:2012} for the latter)}. With non-adaptive tests, however, it turns out~\cite{DyaR:82, Dya:89} that requiring zero-error reconstruction {implies that at least $\Omega(D^2\log(N)/\log(D))$ tests must be performed}. Given this potentially large gap between the number of tests required in the two setting, in this work we focus on the ``small-error'' setting.
\end{itemize}

\noindent {\bf \underline{Decoding complexity:}}
The discussion above focused exclusively on the number of tests required, with no regard to the computational complexity of the corresponding reconstruction algorithms. While many of the algorithms reprised above are reasonably computationally efficient, the decoding complexity of most still scales at least linearly in $N$. Some notable exceptions are the results in~\cite{GurI:04,IndN:10,NgoP:11}. The most recent work in this line with decoding complexity that is sub-linear in $N$ culminated in a group-testing algorithm with $M = \cO{D^2\log(N)}$ tests, and decoding complexity that scales as $poly(M)$.\footnote{An algorithm with $\cO{D\log(N/D)}$ tests was also presented in the regime where $D = \Theta(N)$.} As noted in~\cite{IndN:10}, such algorithms can find applications in data-stream algorithms for problems such as the ``heavy hitter'' problem~\cite{CorM:05}
or in cryptographic applications such as the ``digital forensics'' problem~\cite{GooAT:05}.
Also as noted in~\cite{NgoP:11},  {\it ``[the group-testing] primitive has found many applications as stand alone objects and as building blocks in the construction of other combinatorial objects.''} We refer the reader interested in these and other applications to the excellent expositions in~\cite{IndN:10, NgoP:11}, and focus henceforth simply on the purely combinatorial problem of group-testing.

Our starting point is to note that since the sub-linear time algorithms in~\cite{IndN:10, NgoP:11} require zero-error reconstruction, the penalty paid in terms of the number of tests is heavy ($\Omega(D^2\log(N)/\log(D))$ as opposed to $\cO{D\log(N)}$.) Further, the decoding complexity is a low-degree polynomial in $M = \cO{D^2\log(N)}$, leaving a significant gap vis-a-vis the information-theoretic lower-bound of $\Omega(D\log(N))$ decoding steps (since any algorithm must examine at least $\cO{D\log(N)}$ test outcomes to have a ``reasonable'' probability of success).
%\sj{Say at least a few lines about the ideas used in these algorithms.}

\subsection{Our contributions}
In our work, we consider both the adaptive and non-adaptive group-testing settings, with noisy tests, and decoding error scaling as $\cO{1/poly(D)}$ in both settings.
\begin{itemize}
\item {\bf{\underline{\em Multi-stage adaptive algorithm}:}}
In the adaptive setting ours is the { first} algorithm  to be simultaneously order-optimal (up to a small constant factor that depends on the noise parameter $q$) in the number of tests required, and in decoding complexity -- $\Theta(D\log(N))$ for both measures. Our adaptive algorithm also does not need ``much'' adaptivity. In particular, our algorithm has $\cO{\log(D)}$ stages, where the tests within each stage are non-adaptive -- it is only across stages that adaptivity is required.

\item {\bf{\underline{\em Non-adaptive algorithm}:}} Analogously, in the non-adaptive setting
we present the {first} algorithm that is simultaneously near-optimal in both number of measurements and decoding complexity (requiring $\cO{D\log(D)\log(N)}$ tests and having a decoding complexity of {${\cal O}(D(\log N+\log^{2}D))$}.

\item {\bf{\underline{\em Two-stage adaptive algorithm}:}} Finally, combining ideas from the above, we present the first $2$-stage algorithm that is simultaneously near-optimal in both number of measurements and decoding complexity, with both the number of tests and the decoding complexity scaling as {${\cal O}(D(\log N+\log^{2}D))$}.
\end{itemize}

The rest of this paper is organized as follows. We first present the high-level overview of Grotesque tests (which is the main tool for our algorithm designs) and three group testing algorithms in Section \ref{sec:high-level}. Section \ref{sec:grotesque} to Section \ref{sec:2stage} contain detailed descriptions and analysis of Grotesque tests and our group testing algorithms. Section \ref{sec:conclusion} concludes this paper.

\section{High-level overview}\label{sec:high-level}
We now preview the key ideas used in designing our algorithms.

We begin by noting that our multi-round adaptive algorithm has decoding complexity that is information-theoretically order-optimal (and in some parameter ranges, for instance when $D = \cO{poly(\log (N))}$ the non-adaptive algorithm, and the $2$-round adaptive algorithm do too).
If our algorithms are to indeed be as blindingly fast as claimed above, it'd be very nice to have a black-box that has the following property -- with probability $1-\cO{1/poly(D)}$, given $\cO{\log(N)}$ (noisy) non-adaptive tests on a subset of items that contain exactly one defective item that has not yet been identified, in $\cO{\log(N)}$ time the black-box outputs the index number of this defective item. Our multi-stage adaptive group testing algorithm then gives to this black-box subsets of items that, with constant probability, contain exactly one unidentified defective item. Our non-adaptive group testing algorithm, on the other hand, gives to this black-box subsets of items that, with probability $\Omega(1/\log(D))$, contain exactly one unidentified defective item. These choices lead to the claimed performance of our algorithms.

%Given that there are $D$ defective times, such a black-box could lead to algorithms with $\cO{D\log(N)}$ tests and $\cO{D\log(N)}$ decoding complexity, matching the information-theoretic lower bounds.

\subsection{GROTESQUE Tests}
%\sj{Insert dependence on $q$ everwhere!}

{We first describe a non-adaptive testing and decoding procedure (which we call GROTESQUE testing) that is useful in simulating such a black-box.
% in the scenario with adaptive tests. (In the scenario with non-adaptive tests, some new ideas are needed in addition to this.)}

GROTESQUE first performs {\it multiplicity testing} -- it takes as inputs a set of $n$ items (where $n$ may in general be smaller than $N$), and ``quickly'' (in time $\cO{\log(D)}$) first estimates (with ``high'' probability) whether these $n$ items contain $0$, $1$, or more than one defectives. If the $n$ items contain $0$ or more than $1$ defectives, GROTESQUE outputs this information and terminates at this point. However, if the $n$ items contain exactly $1$ defective item, it then performs {\it localization} --  it ``quickly'' (in time $\cO{\log(n)}$) estimates (with ``high'' probability) the index number of this item. Both these processes (multiplicity testing, and localization) are non-adaptive.
\begin{itemize}
\item{\bf{\underline{\em Multiplicity tests}:}} The idea behind multiplicity testing is straightforward -- GROTESQUE simply performs $\Theta(\log(D))$ {\it random} tests, in which each of the $n$ items is present in each of the $\Theta(\log(D))$ tests with probability $1/2$ (hence these tests are non-adaptive). As Table~\ref{fig:table1} demonstrates, if the set of $n$ items being tested has exactly one defective item, then in expectation about half the $\Theta(\log(D))$ multiplicity tests should have positive outcomes, otherwise the number of tests with positive outcomes should be strictly bounded away from $1/2$ (even if the tests are noisy). In fact, the probability of error in the Multiplicity testing stage can be concentrated to be lower than $\exp({-\Theta(\log(D))}) = \cO{1/(poly(D))}$.
%Since in our setting $D = \cO{N^{1-\gap}}$, the probability of error of the multiplicity estimation step can be concentrated to be $\cO{1/poly(D)}$.

\item{\bf{\underline{\em Localization tests}:}} The idea behind {\it localization} is somewhat more involved. For this sub-procedure, GROTESQUE (non-adaptively) designs {\it a priori} a sequence of binary $\Theta({\log(N)})\times n$ matrices. In particular, the columns of each such matrix correspond to the collection of {\it all} codewords of a {\it constant-rate expander code}~\cite{Spi:95} with block-length $\Theta({\log(N)})$. In brief, these are error-correcting codes whose redundancy is a constant fraction of the block-length, and that can correct a constant fraction of bit-flips with ``high probability'' (for instance, Barg and Z\'emor~\cite{BarZ:04} analyze their performance against the ``probability $q$ bit-flip noise'' and demonstrate that the probability of error decays exponentially in the block-length). Further, expander codes have the very desirable property that their decoding complexity scales linearly in the block-length. But {conditioning on the event that the multiplicity of defectives in the $n$ items being tested equals exactly $1$} (say the $i$th item is defective), this means that in the {\it noiseless} setting, the binary vector of $\Theta({\log(N)})$ test outcomes corresponding to the localization tests performed by GROTESQUE correspond exactly to the $i$th codeword of the expander code. Even in the {\it noisy} setting, the vector of test outcomes corresponds to the $i$th codeword being corrupted by Bernoulli($q$) bit-flips. In both of these settings, by the guarantees provided in~\cite{Spi:95, BarZ:04}, the GROTESQUE localization procedure outputs the incorrect index (corresponding to the defective item) with probability $\exp(-\cO{\log(N)}) = \cO{1/(poly(N))} = o(1/(poly(D)))$. %which can be made to be $\cO{1/poly(D)}$ by choosing constant factors appropriately and noting that in our setting $D = \cO{N^{1-\gap}}$.

\end{itemize}

We now present the ideas behind our three algorithms, highlighting the use of GROTESQUE tests in each.
\subsection{Adaptive Group Testing}

For the adaptive group-testing problem, we now use a few ``classical'' combinatorial primitives (``balls and bins problem''/McDiarmid's concentration inequality/``coupon collector's problem''), combined carefully with the GROTESQUE testing procedure. We do this in two phases:

\begin{itemize}
\item{\bf\underline{\em Random binning}:}
We first note that if we randomly partition the set of all N items into say $2D$ disjoint pools (each with $N/2D$) items, then with ``high'' probability (via McDiarmid's inequality~\cite{Mcd:89}) a constant fraction of the pools contain exactly one defective item. Hence GROTESQUE can use the disjoint pools as inputs with $n = N/(2D)$, and in a single stage of $2D$ pools and corresponding $2D\cdot \cO{\log(n)} = \cO{D\log(N/D)}$ non-adaptive tests identify a constant fraction of the $D$ defective items (with probability at least $1-\exp(-\Theta(D))$). In the subsequent $\cO{\log(D)}$ stages, since the number of unidentified defectives decays geometrically, the number of pools per stage can be chosen to decay geometrically for comparable performance. Since the number of tests decay geometrically, the overall number of GROTESQUE tests sum up to $\cO{D}$. However, each GROTESQUE test requires at most $\log(N)$ tests with corresponding time-complexity $\cO{\log(N)}$. Hence the overall number of tests, and time-complexity, of these random binning stages is $\cO{D\log(N)}$.

However, by the time we're at the $\cO{\log(D)}$-${th}$ stage, the number of remaining unidentified defective items is ``small'' (at most $\log(D)$). Hence concentration inequalities may not provide the desired decay in the probability of error of that stage (corresponding to the event that the stage correctly recovers less than a certain constant fraction of the defective items remaining from the previous stage). The overall probability of error of all the random-binning stages is dominated by the probability of error of the last random-binning stage.

\item{\bf\underline{\em Coupon collection}:}
To compensate for this ``problem of small numbers'', {in the last stage} we segue to an alternative primitive, that of {\it coupon collection}~\cite{MotR:95}. We choose parameters so that at the beginning of this coupon-collection {stage}, there are less than ${\log(D)}$ unidentified defectives remaining. Rather than {\it partitioning} the set of items into pools as in the previous stages, in this stage we {\it independently} choose
$\cO{\log^{2}(D)\log\log(D)}$ pools (corresponding to the ``{coupons}'' in the coupon-collector's problem) -- note that $\cO{\log^{2}(D)\log\log(D)} = o(D\log(N))$, hence this coupon-collection stage does not change the overall number of tests required by more than a constant factor. Each pool is chosen to be of size so that with constant probability it contains one of the remaining $\cO{\log(D)}$ unidentified defectives. Each pool/coupon is given as an input to GROTESQUE.
By standard concentration inequalities {on the coupon collection} process, after $\cO{\log^{2}(D).\log(\log(D))}$ coupons have been collected, with probability $1-\cO{1/poly(D)}$ all the defectives are decoded.
\end{itemize}

\subsection{Non-adaptive Group Testing}

The critical difference between adaptive and non-adaptive group testing arises from the fact that
defective items that have already been identified cannot be removed from future tests. This means that if  we na\"ively use the adaptive procedure outlined above and optimize parameters, it results in an algorithm with $\cO{D\log(D)\log(N)}$ items and $\cO{D\log(D)\log(N)}$ decoding complexity.

Instead, we redesign our testing procedure to speed up the decoding complexity to $\cO{D(\log^2(D)+\log(N))}$ (though we still need $\cO{D\log(D)\log(N)}$ tests). In particular, we first non-adaptively choose a set of $\cO{\log(D)}$ random graphs
%\sj{Change the $G_i$ to match future notation, or vice verse}
{$\graph_g$s} with the following properties --
each graph is bipartite, has $N$ nodes on the left (and is left-regular with left-degree $1$) and $\cO{D}$ nodes on the right. Each right node corresponds to a group of $\cO{\log(N)}$ non-adaptive (GROTESQUE) tests, for a total of $\cO{D\log(D)\log(N)}$ non-adaptive tests.

A node on the right of $\graph_{g}$ is said to be a ``leaf node with respect to $\graph_{g}$'' if the left-nodes connected to it contain exactly one defective item (or in other words, GROTESQUE's multiplicity test, run on the items connected to such a node, would with high probability return a value of $1$).
It can be shown {via} standard concentration inequalities that for each defective item (on the left of each bipartite graph $\graph_{g}$), a constant fraction of its $\cO{\log(D)}$ right neighbours (over all $\cO{\log(D)}$ graphs) are such that they are ``leaf nodes with respect to $\graph_{g}$''.  {For each $\graph_{g}$, the items/left-nodes connected to its right nodes} may now be given as an input to GROTESQUE (with $n = \cO{\log(N)}$ tests (however, in our actual algorithm, not all right nodes of all $\graph_{g}$s are necessarily chosen as inputs to GROTESQUE -- the speedup in decoding complexity arises crucially from a more careful procedure in deciding which right nodes to use to give inputs to the GROTESQUE testing procedure).

{
Decoding proceeds by iteratively following the steps below:
\begin{enumerate}
\item \label{step:init} {\it Initialization of leaf-nodes:} We initialize a {\it leaf-node list} that contains all leaf nodes. We do this by picking right nodes and sequentially feeding the corresponding left-nodes to GROTESQUE's multiplicity testing procedure ({\it not} its localization procedure, at least yet).
\item \label{step:local} {\it Localization of a single defective item:} We pick a right node in the leaf node list, and use GROTESQUE's localization testing procedure on this node to identify the corresponding defective item.
\item \label{step:shrink} {\it Updating the leaf-node list:} We remove all the right neighbours of the defective item identified in the previous stage from each of the $\cO{\log(D)}$ graphs, and updating the leaf node list. Finally we return to Step~\ref{step:local}, until all $D$ defectives have been found.
\end{enumerate}
\noindent It can be verified that the first and third steps of this algorithm both take $\cO{D\log^2(D)}$ steps, and the second step takes $\cO{D\log(N)}$ steps, to give the overall desired computational complexity.}

\subsection{Two-stage Adaptive Group Testing}
{We now merge ideas from our previous algorithms to present an adaptive group testing algorithm with ``minimal adaptivity'' (just two stages). We also use in our algorithm a key primitive suggested in  Theorem 1 of~\cite{DamM:2012}, specifically ``birthday paradox hashing''.}

The main difference between our algorithm and the one presented in Theorem 1 of~\cite{DamM:2012} is that our algorithm is robust to Bernoulli($q$) noise, has decoding complexity scaling as ${\cal O}(D(\log N+\log^{2}D))$, and number of tests scaling as ${\cal O}(D(\log N+\log^{2}D))$. In contrast, the algorithm in~\cite{DamM:2012} requires fewer tests (${\cal O}(D\log N)$), but significantly higher decoding complexity ($\cO{\exp(N)}$), and is not robust to noise in the measurement process.
%\sj{CHECK COMPLEXITY}

The high-level intuition behind the algorithm in Theorem 1 of~\cite{DamM:2012} is to first partition the $N$ items into at least $D^2$ groups. The ``Birthday Paradox'' is a simple calculation that demonstrate that if $D$ balls are thrown uniformly at random into more than $D^2$ bins, then the probability of a ``collision'' (there being a bin with more than one ball in it) is small.

Using this primitive, it follows that with high probability each group contains at most one defective item. In the first stage, $\cO{D\log (D^2)}$ non-adaptive group tests are performed to identify the $D$ groups (out of $D^2$) that contain exactly one defective.

In the second stage (that depends adaptively on the outcomes of the first stage), $\cO{\log{N/D^2}}$ non-adaptive group tests are performed on the $N/D^2$ items of each group that has been identified as containing a defective in the first stage. Thus, the total number of tests required for the second stage is $\cO{D\log{(N/D^2)}}$.

However, the high decoding complexity of the algorithm in Theorem 1 of~\cite{DamM:2012} arises from the fact that the non-adaptive group testing algorithm used has high decoding complexity. We hence substitute the non-adaptive group test used in their scheme with the one presented in Section~\ref{sec:nonadaptive} resulting in a drastic decrease in the decoding complexity at the cost of a potential slight increase in the number of tests required. Another relatively minor difference in our algorithm is that to get the probability of error to decay as $\cO{1/poly(D))}$ as desired for all our algorithms, we use $poly(D)$ groups (where the polynomial is of degree at least $3$) in the first stage instead of the $\Omega(D^2)$ groups used in the first stage of~\cite{DamM:2012}.

\begin{table*}[ht!]
\centering{}
%\scriptsize{{\small{} }%

\begin{tabular}{l|l}
\hline
\multicolumn{2}{l}{Group Testing}\tabularnewline
\hline
$N$ & The total number of items\tabularnewline
$D$ & The total number of defective items\tabularnewline
$\delta$ & $D={\cal O}\left(N^{1-\delta}\right)$\tabularnewline
$q$ & The pre-specified probability that the result of a test differs from
the true result\tabularnewline
${\cal D}$ & The set of all defective items\tabularnewline
${\cal N}$ & The set of all $N$ items \tabularnewline
$M$ & The total number of tests required to identify the set of defective
items\tabularnewline
\hline
\end{tabular}
\caption{Table of notation used for the general group testing problem}
%}

\end{table*}

\begin{table*}[ht!]
\centering{}
%\scriptsize{{\small{} }%

\begin{tabular}{l|l}
\hline
\multicolumn{2}{l}{GROTESQUE TESTS}\tabularnewline
\hline
${\cal S}$ & The set of items being tested in GROTESQUE TESTS.\tabularnewline
$n$ & The number of items being tested in GROTESQUE TESTS, $n=|{\cal S}|$\tabularnewline
$d$ & The total number of defectives of GROTESQUE TESTS input\tabularnewline
$m_{1}$ & The number of Multiplicity tests\tabularnewline
$m_{2}$ & The number of Localization tests\tabularnewline
$K$ & The number of positive results of Multiplicity tests\tabularnewline
${\cal C}$ & Expander code\tabularnewline
%$f(R,q)$ & \tabularnewline
%$R$ & \tabularnewline
%$\alpha$ & \tabularnewline
%$\varepsilon$ & \tabularnewline
%$\epsilon$ & \tabularnewline
%$c_{1}(q)$ & \tabularnewline
%$P_{1}$ & \tabularnewline
${\bf y}^{(M)}$ & The length-$m_{1}$ binary vector $\left(y_{1}^{(M)},y_{2}^{(M)},\ldots,y_{m_{1}}^{(M)}\right)^{T}$denoting
the outcomes \tabularnewline
 & of the Multiplicity Encoder in the absence of noise.\tabularnewline
${\bf y}^{(L)}$ & The length-$m_{2}$ binary vector $\left(y_{1}^{(L)},y_{2}^{(L)},\ldots,y_{m_{2}}^{(L)}\right)^{T}$denoting
the outcomes \tabularnewline
 & of the Localization Encoder in the absence of noise.\tabularnewline
${\bf \hat{y}}^{(M)}$ & The length-$m_{1}$ binary vector denoting the actually observed noisy
outcomes \tabularnewline
 & of the Multiplicity Encoder.\tabularnewline
${\bf \hat{y}}^{(L)}$ & The length-$m_{2}$ binary vector denoting the actually observed noisy
outcomes \tabularnewline
 & of the Localization Encoder.\tabularnewline
\hline
\end{tabular}
\caption{Table of notation used in GROTESQUE tests}
%}

\end{table*}

\section{GROTESQUE Tests}\label{sec:grotesque}

A key component of the algorithms that we present in this paper are GROTESQUE Tests (short for  {\bf GRO}up {\bf TES}ting {\bf QU}ick and {\bf E}fficient). Given a set  $\mathcal{S}=\{j_1,j_2,\ldots,j_n\}\subseteq{\mathcal{N}}$ containing an unknown number $d$ of defectives, the GROTESQUE tests tell us, with high probability, the number of defective items $d$ and the location if there is just one. The input to GROTESQUE tests is the $n$-length vector, $(x_{j}:j\in\mathcal{S})$, where $x_j$ is $1$ if $j$ is defective, and $0$ otherwise. The test outputs are $y_1,y_2,\ldots,y_m$, each of which are flipped independently by a Binary Symmetric Channel with transition probability $q$ to obtain noisy tests $\hat{y}_1,\ldots\hat{y}_m$. The noisy tests are then processed by the GROTESQUE decoder to output one of the following possibilities:
\begin{enumerate}
\item $d=0$, {\em i.e.}, there is no defective.
\item $d=1$. In this case, the decoder also outputs the location of the defective in set $\mathcal{S}$.
\item $d>1$, {\em i.e.}, there are at least two defective items.
\end{enumerate}
GROTESQUE consists of two kinds of tests - {\em multiplicity tests} and {\em localization tests}. Multiplicity tests tell us which of the above three possibilities it is, and the localization tests tell us which item is defective if there exists exactly one defective item in the $n$ items.

\begin{figure*}[ht]
\centering\label{fig:grotesque}
\includegraphics[width=\linewidth]{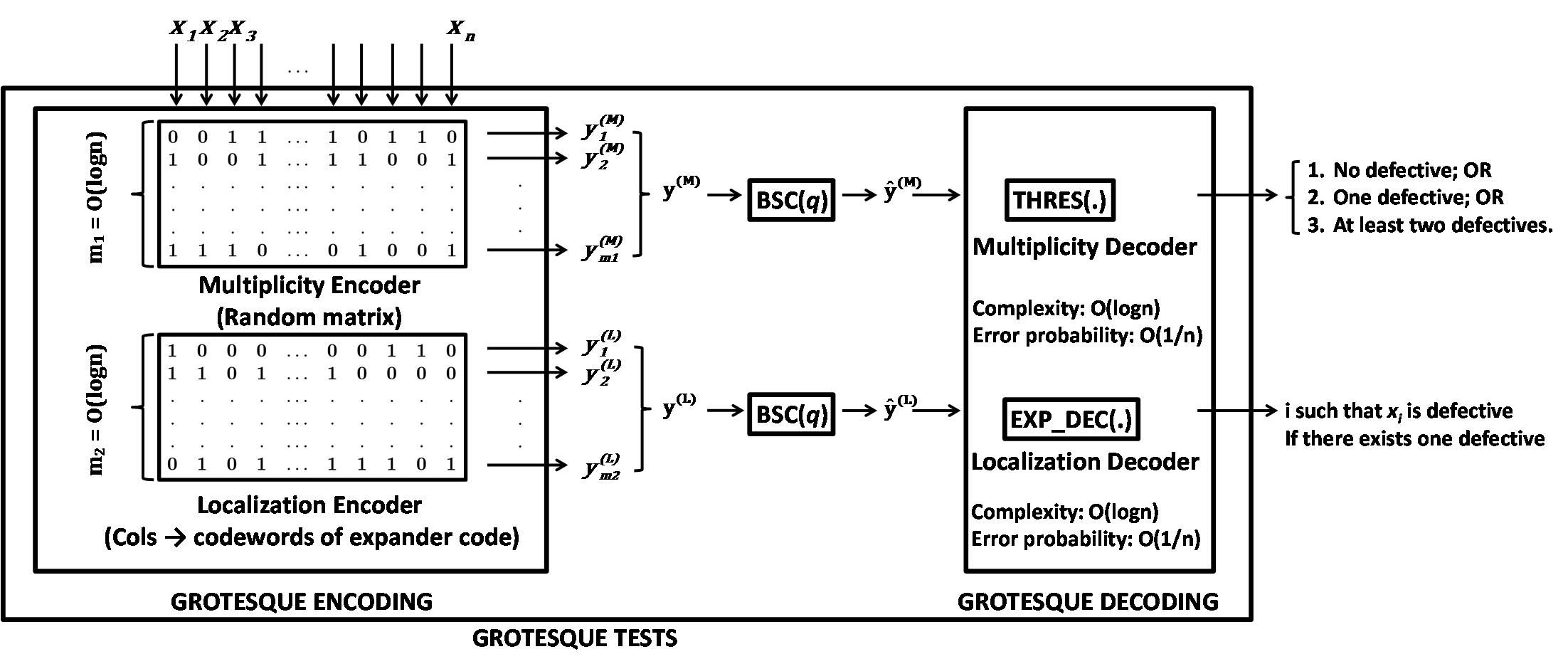}\caption{The input of a GROTESQUE tests is a set of $n$ items with unknown number
of defectives. There are three possible outputs: there exists no
defective item, or there exists exactly one defective item and the corresponding index number, or there exist at least two defective items (but GROTESQUE does not output the corresponding index numbers). There are
two parts to GROTESQUE Encoding: the Multiplicity Encoder and the Localization
Encoder. For the Multiplicity Encoder, we generate $m_{1}$ tests. Each
item is included in each test uniformly at random. In the Localization Encoder, whether {or not} a certain item is included in $m_{2}$ tests is determined
by the corresponding codeword of an expander code. The inputs to GROTESQUE
Decoding are the results of outputs of GROTESQUE Encoding passing
through $BSC(q)$. Again, GROTESQUE Decoding is divided into two parts:
the Multiplicity Decoder and the Localization Decoder. In the Decoder, we count the number of positive outputs of the Multiplicity Encoder,
compare {this number} with the expected numbers for three cases, and {decide on the multiplicity according to the rule given in Equation~\ref{eq:1}}. We call the above process $THRES(.)$ (short for {\it threshold detector}).
To implement the Localization
Decoder, we use $EXP\mbox{\_}DEC$ (short for {\it expander code decoder})
to do the decoding. If there exists exactly one defective item, the
output should be one of the codewords of the expander code. This tells us which item is defective.
The overall time complexity and error probability for the GROTESQUE tests are, respectively, ${\cal O}(\log n)$
and ${\cal O}(1/poly(n))$.}
\end{figure*}

\subsection{Multiplicity testing}

We generate $m_{1}$ tests in this part. In each test, the $j$-th
item is included with probability $1/2$. If we represent the tests
as a $m_{1}\times n$ matrix, $A^{(M)}$, then {each} entry of the $A^{(M)}$ is a Bernoulli random variable with parameter $1/2$. We count the number of positive items in a multiplicity test and denote it by $K$. We then use Equation~\ref{eq:1} to estimate the multiplicity value of the test. The use of Equation~\ref{eq:1} is justified by the values of Table~\ref{fig:table1}, which computes the expected value of the number of positive test outcomes $K$.

\begin{table}[tbh]
\centering{}
\begin{tabular}{|c||c|c|}
\hline
d & $K$ (Noiseless tests) & $K$ (Noisy tests) \tabularnewline
\hline\hline
$0$ & $0$ & $q\times m_{1}$\tabularnewline
\hline
$1$ & $1/2\times m_{1}$ & $1/2\times m_{1}$\tabularnewline
\hline
$\geq2$ & $\geq3/4\times m_{1}$ & $\geq(3/4-q/2)\times m_{1}$\tabularnewline
\hline
\end{tabular}
\caption{Expected number of positive test outcomes of the multiplicity tests.\label{fig:table1}
}
\end{table}

When the tests are noisy (as mentioned before, we assume that the noise follows the output of a BSC($q$)), GROTESQUE's multiplicity decoder uses the following rule to produce an output as to the multiplicity (when the tests are noiseless, the same equation with $q$ set to zero may be used):

%\[
%d=\begin{cases}
%0 & \mbox{, if \ensuremath{K}\ensuremath{\in}\ensuremath{\left[\left.0,\mbox{ }(\ensuremath{\frac{1}{4}}+\ensuremath{\frac{q}{2}})\ensuremath{m_{1}}\right.\right)}}\\
%1 & \mbox{, if \ensuremath{K}\ensuremath{\in}\ensuremath{\left[\left.(\ensuremath{\frac{1}{4}}+\ensuremath{\frac{q}{2}})\ensuremath{m_{1}},\mbox{ }(\ensuremath{\frac{5}{8}}-\ensuremath{\frac{q}{4}})\ensuremath{m_{1}}\right.\right)}}\\
%\geq2 & \mbox{, if \ensuremath{K}\ensuremath{\in}\ensuremath{\left[\left.(\ensuremath{\frac{5}{8}}-\ensuremath{\frac{q}{4}})\ensuremath{m_{1}},\mbox{ }\ensuremath{\infty}\right.\right)}}
%\end{cases}
%\]

%$\label{eq:1}
%d=\left\{ \begin{array}{cc}
%0 & \mbox{, if }K\in\left[0,\left(\frac{1}{4}+\frac{q}{2}\right)m_{1}\right)\\
%1 & \mbox{, if }K\in\left[\left(\frac{1}{4}+\frac{q}{2}\right)m_{1},\left(\frac{5}{8}-\frac{q}{4}\right)m_{1}\right)\\
%\geq2 & \mbox{, if }K\in\left[\left(\frac{5}{8}-\frac{q}{4}\right)m_{1},\infty\right)
%\end{array}\right.
%$

\begin{equation}\label{eq:1}
d=\left\{ \begin{array}{cc}
0, & \mbox{if }K\in\left[0,\left(\frac{1}{4}+\frac{q}{2}\right)m_{1}\right)\\
1, & \mbox{if }K\in\left[\left(\frac{1}{4}+\frac{q}{2}\right)m_{1},\left(\frac{5}{8}-\frac{q}{4}\right)m_{1}\right)\\
\geq2, & \mbox{if }K\in\left[\left(\frac{5}{8}-\frac{q}{4}\right)m_{1},\infty\right)
\end{array}\right.
\end{equation}

\subsection{Localization}
{If the multiplicity test estimates $d$ to be $1$, we then use the results of the $m_{2}$ localization tests (which have been non-adaptively designed {\it a priori}) to localize the defective item.}
%We generate $m_{2}$ tests to localize the defective item if, in the
%multiplicity testing part, we conclude that there exists exactly one defective
%item among them.
We represent the tests as a $m_{2}\times n$ matrix,
$A^{(L)}$. The difference {between $A^{(L)}$ and $A^{(M)}$ is that the columns of $A^{(L)}$ correspond to distinct
codewords} of an expander code, $\cal C$ (while the entries of $A^{(M)}$ were chosen uniformly at random). Different columns correspond to different
codewords. If there is exactly one defective item, then the output
 {of the localization tests} should be one of the codewords of $\cal C$ in the {scenario with noiseless tests}, or the result
of one of the codewords of $\cal C$ XOR'd with a vector  whose entries are
i.i.d. Bernoulli($q$) random variables in the {scenario with noisy tests}.
By Theorem \ref{thm:expandercode}, the Localization step is correct with error probability $\leq2^{-fm_{2}}$ (where $f$ is a constant for the code $\cal C$) and decoding complexity $\mathcal{O}(m_{2})$.

The following theorem about error-exponents of expander codes is useful in our construction.
%{\sj{use later paper, without the constraint on $q$}}
\begin{theorem}[\cite{Barz:02}]
\label{thm:expandercode}For a given rate $R$, any
$\varepsilon>0$, and $\alpha<1$ there exists a polynomial-time constructible
code ${\cal C}$ of length $m_2$ such that $P_{e}({\cal C},\mbox{ }q)\leq2^{-\alpha m_2f(R,\mbox{ }q)}$,
where

%\[
%f(R,\mbox{ }q)=\underset{R\leq R_{0}\leq1-H(q)}{max}E(R_{0},\mbox{ }q)(R_{0}-R)/2-\varepsilon
%\]

\[
f(R,\mbox{ }q)={\max_{R\leq R_{0}\leq1-H(q)}}E(R_{0},\mbox{ }q)(R_{0}-R)/2-\varepsilon
\]

Here $E(R_{0},\mbox{ }q)$ is the ``random coding exponent'' and $H(\cdot)$ is the binary entropy function. The decoding complexity of a sequential implementation of this decoding
is $\mathcal{O}(m_2)$ and $f(R,\mbox{ }q)$ is positive\footnote{In \cite{BarZ:04}, an improvement of this result is given without the constraints on $q$.} for all $0<q<1-H(q)$.
\end{theorem}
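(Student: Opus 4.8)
The plan is to reconstruct the Barg--Z\'emor analysis, which combines Z\'emor's bipartite expander-code construction with an error-exponent bound on its iterative decoder. First I would fix the construction. Let $G=(V_1\cup V_2,E)$ be a $\Delta$-regular bipartite expander on $2n$ vertices with $|E|=n\Delta=m_2$ edges, chosen so that its normalized second-largest eigenvalue $\lambda/\Delta$ is as small as we like (e.g.\ a Ramanujan graph with $\Delta$ large); identify the $m_2$ code coordinates with the edges of $G$. Attach to every vertex a constituent linear code $A$ of length $\Delta$ and rate $r$, and declare $x\in\{0,1\}^{m_2}$ to lie in ${\cal C}$ iff, for each vertex, the $\Delta$ coordinates incident to it form a codeword of $A$. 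A standard double count gives $\mathrm{rate}({\cal C})\ge 2r-1$, so to attain overall rate $R$ one picks $r$ slightly larger than $(1+R)/2$. The key additional choice: take $A$ to meet the random-coding bound over $\mathrm{BSC}(q)$; since $A$ has constant length such a code exists by the probabilistic method and is found by exhaustive search at a one-time cost independent of $m_2$ (this is the meaning of ``polynomial-time constructible''), and maximum-likelihood decoding of $A$ costs $\cO{1}$. The gap between $r$ and $(1+R)/2$, the residual $\lambda/\Delta$ imperfection of the expander, and the combinatorial overhead of the union bound below are exactly what the two slack parameters $\varepsilon$ and $\alpha<1$ absorb; all three are driven to zero by taking $\Delta$ large and the tolerated corruption fraction small.

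Next comes the decoder and its structural analysis. Decoding runs in rounds: in odd rounds one ML-decodes the constituent code at every vertex of $V_1$, in even rounds at every vertex of $V_2$; a sequential schedule that only revisits the shrinking neighbourhood of the still-corrupted vertex set keeps the total work at $\cO{m_2}$. Call a vertex \emph{corrupted} after a round if its incident coordinates still differ from the transmitted codeword. The engine of the proof is a corruption-shrinking lemma: by the expander mixing inequality, if after some round the corrupted set on one side has size at most a small constant fraction $\beta n$ \emph{and} no currently-decoded vertex sees more channel errors than $A$ can correct, then the corrupted set on the other side drops to at most $\beta n/2$. Iterating, $\cO{\log m_2}$ rounds eliminate all corruption, so decoding can fail only if already at round one ``too many'' vertices are forced to correct ``too many'' errors.

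It remains to bound that first-round failure event. Conditioning on the $\mathrm{BSC}(q)$ error vector $e$, a vertex of $V_1$ is problematic only if the weight of $e$ on its $\Delta$ incident edges exceeds the correction radius of $A$; by the random-coding choice of $A$ this occurs with probability at most $2^{-\Delta E(r,q)}$, and the shrinking lemma tolerates up to a $\beta$-fraction of problematic vertices. A union bound over the at most ${{n}\choose{\beta n}}$ candidate ``bad sets'', times the per-set probability $2^{-\beta n\Delta E(r,q)}$, gives total failure probability at most $2^{-\beta m_2 E(r,q)(1-o(1))}$, where the $o(1)$ collects $\frac{1}{n}\log {{n}\choose{\beta n}}$ and is $o(1)$ relative to the exponent precisely because $\Delta$ is large. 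Substituting $r\approx(1+R_0)/2$ for the best admissible constituent rate $R_0$ and optimizing yields the claimed exponent $f(R,q)=\max_{R\le R_0\le 1-H(q)}E(R_0,q)(R_0-R)/2-\varepsilon$: the factor $1/2$ is the loss from the two-layer vertex/edge structure, and the $-\varepsilon$ together with the multiplicative $\alpha<1$ soak up the entropy term and the expander slack. Positivity of $f$ for $0<q<1-H(q)$ is then immediate, since the $\mathrm{BSC}(q)$ random-coding exponent $E(R_0,q)$ is strictly positive for every $R_0<1-H(q)$, so any $R<R_0<1-H(q)$ makes $E(R_0,q)(R_0-R)/2>0$ and $\varepsilon$ can be chosen below it.

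The step I expect to be the main obstacle is the corruption-shrinking lemma in the \emph{noisy} regime (as opposed to the worst-case-fraction regime of the original Sipser--Spielman/Z\'emor analysis): one must show that, outside an exponentially small event, the pattern of constituent-decoding failures is spatially spread out enough for the expander mixing bound to bite, while simultaneously pushing the constituent rate $r$ arbitrarily close to $(1+R_0)/2$ and keeping the shrinkage factor strictly below $1$. Making the constants line up so that the surviving exponent is exactly $E(R_0,q)(R_0-R)/2$ rather than something lossier is the delicate part, and is precisely the contribution of~\cite{Barz:02} (refined in~\cite{BarZ:04}, which removes the restriction $q<1-H(q)$).
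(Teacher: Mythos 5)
A preliminary remark: the paper offers no proof of Theorem~\ref{thm:expandercode} to compare against --- the statement is imported verbatim from Barg and Z\'emor~\cite{Barz:02} and used as a black box in the analysis of GROTESQUE's localization step. So the only question is whether your reconstruction would actually establish the stated bound, and there it has a genuine gap exactly at the theorem's quantitative core. In your construction both sides of the bipartite graph carry the \emph{same} constituent code $A$ of rate $r$, so the bound $\mathrm{rate}({\cal C})\ge 2r-1$ pins $r$ to roughly $(1+R)/2$ and leaves no free parameter to optimize over; correspondingly, the tolerable fraction $\beta$ of first-round vertex failures in your union bound is an unspecified small constant dictated by the expander mixing inequality (i.e.\ by $\lambda/\Delta$), and is never connected to $R_0-R$. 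What your argument actually yields is an exponent of the form $\beta\,E\!\left((1+R)/2,q\right)$, not $\max_{R\le R_0\le 1-H(q)}E(R_0,q)(R_0-R)/2$; the later substitution ``$r\approx(1+R_0)/2$'' is inconsistent with the rate accounting you set up (it produces a code of rate about $R_0$ rather than $R$, leaves the per-vertex exponent at $E((1+R_0)/2,q)$ rather than $E(R_0,q)$, and still does not make $\beta$ equal $(R_0-R)/2$). Attributing the factor $1/2$ to ``the two-layer vertex/edge structure'' is an assertion, not a derivation.

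The actual Barg--Z\'emor argument is, roughly speaking, asymmetric: one class of vertices carries a short code of rate $R_0$ chosen (by search) to meet the random-coding exponent $E(R_0,q)$ --- this is where the free optimization parameter $R_0$ enters --- while the rate gap $R_0-R$ is spent on the other constituent code, and it is that code's redundancy, through its correction radius, that determines the fraction of vertex failures, of order $(R_0-R)/2$, which the iterative decoder can absorb; multiplying the per-vertex exponent by this fraction is what produces the quantity being maximized. Your sketch has the right scaffolding (edge--vertex incidence construction, constant-size ML decoding at vertices, corruption shrinkage via expander mixing, domination by the first round, $\cO{m_2}$ sequential decoding, positivity of $f$ whenever $R<1-H(q)$), but --- as you yourself concede in the final paragraph --- the step that makes the exponent come out to exactly $E(R_0,q)(R_0-R)/2$ is ``precisely the contribution of~\cite{Barz:02}'', which is to say that the distinctive content of the theorem is assumed rather than proved. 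For the purposes of this paper the correct move is simply to cite~\cite{Barz:02} (and~\cite{BarZ:04} for the strengthening), as the authors do; a self-contained proof would require reproducing the asymmetric two-code construction and its first-round large-deviations analysis in full.
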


\subsection{Performance Analysis}

In this section, we analyze the error probability and time complexity
of GROTESQUE tests in terms of $m_{1}$ and $m_{2}$. {The actual choices of these parameters are made in Sections~\ref{sec:adaptive},~\ref{sec:nonadaptive} and~\ref{sec:2stage}, corresponding respectively to  our adaptive, non-adaptive, and two-stage algorithms.}

\noindent $a)$ {\it Error probability:}

We now bound from above the probability of error of the multiplicity and localization sub-routines of GROTESQUE testing. In Lemma~\ref{lem:errorgrotesque} below, we explicitly derive the dependence of the probability of error of GROTESQUE tests on the value of $q$. This dependence can be directly translated into a dependence on the probability of error in each of our algorithms, but for ease of presentation we omit this dependence on $q$ outside this lemma (and focus only on the dependence on $D$ and $N$).
\begin{lemma}\label{lem:errorgrotesque}
\begin{itemize}
\item The error probability of GROTESQUE multiplicity testing is at most $\exp \left (-m_{1}(1-2q)^{2}/32 \right )$.
%$\exp  \left ( - { {\frac{m_1}{8}\left (\frac{1}{2} - q \right )^2}} \right ) $
%\sh{$\cO{1/poly(N)+1/poly(D)}$} if we choose $m_{1}={\cal O}(\log D)$ and \sh{$m_{2}={\cal O}(\log N)$}.
%\sj{FILL IN}
\item Conditioned on $d$ being correctly identified as $1$, the error probability of GROTESQUE multiplicity testing is at most $\exp(-\alpha m_{2}(1-H(q))^{3}/128)$, for some universal $\alpha > 0$.
\end{itemize}
%\begin{itemize}
%\item The error probability of GROTESQUE multiplicity testing is $\exp  \left ( - { {\frac{m_1}{8}\left (\frac{1}{2} - q \right )^2}} \right ) $
%%\sh{$\cO{1/poly(N)+1/poly(D)}$} if we choose $m_{1}={\cal O}(\log D)$ and \sh{$m_{2}={\cal O}(\log N)$}.
%%\sj{FILL IN}
%\item Conditioned on $d$ being correctly identified as $1$, the error probability of GROTESQUE multiplicity testing is $\exp(-\alpha m_{2}(1-H(q))^{3}/128)$, for some universal $\alpha > 0$.
%\end{itemize}
\end{lemma}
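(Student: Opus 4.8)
The plan is to bound the two error events separately, in each case reducing to a standard concentration/error-exponent bound. For the multiplicity part, I would condition on the true value $d\in\{0,1,\ge 2\}$ and analyze the number of positive (noisy) test outcomes $K$, which is a sum of $m_1$ independent Bernoulli random variables (independence follows because the test membership is chosen i.i.d.\ and the BSC noise is independent across tests). From Table~\ref{fig:table1}, in each of the three cases the expectation $\mathbb{E}[K]$ is separated from the nearest decision boundary in Equation~\ref{eq:1} by at least a $(1-2q)/8$ fraction of $m_1$: in the $d=0$ case the mean is $qm_1$ and the boundary is $(\tfrac14+\tfrac q2)m_1$, a gap of $(\tfrac14-\tfrac q2)m_1=\tfrac{1-2q}{4}m_1$; in the $d=1$ case the mean is $\tfrac12 m_1$ and the two boundaries are at distance $(\tfrac14-\tfrac q2)m_1$ and $(\tfrac18+\tfrac q4)m_1$ respectively, the smaller being $\ge \tfrac{1-2q}{8}m_1$; in the $d\ge 2$ case the mean is $\ge(\tfrac34-\tfrac q2)m_1$ while the boundary is $(\tfrac58-\tfrac q4)m_1$, a gap of $\ge(\tfrac18-\tfrac q4)m_1=\tfrac{1-2q}{8}m_1$. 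A one-sided Hoeffding (or Chernoff) bound on $K$ deviating from its mean by this amount then gives failure probability at most $\exp(-2(\tfrac{1-2q}{8})^2 m_1)=\exp(-m_1(1-2q)^2/32)$, which is exactly the claimed bound. I would be slightly careful in the $d\ge 2$ case to note the mean is only bounded below, but since the event we need to exclude is ``$K$ too small,'' the lower bound on the mean is precisely what is needed.

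For the localization part, conditioned on the multiplicity decoder having (correctly) declared $d=1$, the set $\mathcal{S}$ contains a unique defective, say the $i$th item. By construction the localization matrix $A^{(L)}$ has as its $j$th column the $j$th codeword of the expander code $\mathcal{C}$, so in the noiseless case the vector $\mathbf{y}^{(L)}$ of localization outcomes is exactly the $i$th codeword, and in the noisy case it is that codeword XOR'd with an i.i.d.\ Bernoulli($q$) error vector. Hence $\mathrm{EXP\_DEC}$ is decoding a single codeword sent over a BSC($q$), and Theorem~\ref{thm:expandercode} applies directly: choosing a fixed rate $R$ (say $R=1/2$, or more generally any constant with $0<q<1-H(q)$) and a fixed $\alpha<1$, the decoding error is at most $2^{-\alpha m_2 f(R,q)}$. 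The remaining task is to lower-bound $f(R,q)$ by $(1-H(q))^3/128$ (up to the constant hidden in the new $\alpha$). I would do this by picking a convenient suboptimal choice of $R_0$ in the max defining $f$ — for instance $R_0$ roughly midway between $R$ and $1-H(q)$ — so that $R_0-R=\Theta(1-H(q))$, and then lower-bounding the random coding exponent $E(R_0,q)$ by $\Theta((1-H(q))^2)$ using the standard fact that near capacity the random coding exponent is quadratic in the gap to capacity. Multiplying the two factors and absorbing constants gives the $(1-H(q))^3$ scaling; the leftover $-\varepsilon$ in $f$ and the slack between $\alpha$ and $1$ are harmless since we only need \emph{some} universal $\alpha>0$.

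The main obstacle is the second bullet: getting the clean $(1-H(q))^3/128$ form out of the somewhat opaque expression $f(R,q)=\max_{R\le R_0\le 1-H(q)} E(R_0,q)(R_0-R)/2-\varepsilon$ requires a quantitative lower bound on the random coding exponent $E(R_0,q)$ as a function of the gap to capacity, which the cited theorem does not spell out. I would handle this by invoking the classical expansion of $E(R_0,q)$ around $R_0=1-H(q)$ (it vanishes there and has a strictly positive one-sided derivative controlled by $1-H(q)$), so that over a constant-fraction sub-interval of $[R,1-H(q)]$ we have $E(R_0,q)=\Omega(1-H(q))$ and $R_0-R=\Omega(1-H(q))$ simultaneously; their product is $\Omega((1-H(q))^2)$ and, after one more factor of $1-H(q)$ that I will build into the choice of rate margin, yields $(1-H(q))^3/128$ with room to spare. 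Everything else — the independence structure, the Chernoff computation, and the reduction of localization to a single-codeword channel-decoding problem — is routine once the conditioning on the multiplicity output is made explicit.
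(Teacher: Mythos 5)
Your argument matches the paper's proof in both parts: the same three-case threshold-gap analysis (minimum gap $(1-2q)m_1/8$) followed by an additive Chernoff/Hoeffding bound to get $\exp(-m_1(1-2q)^2/32)$, and for localization the same reduction to Theorem~\ref{thm:expandercode} with the midpoint choice $R_0=(R+1-H(q))/2$, a quadratic-in-gap-to-capacity lower bound on $E(R_0,q)$, and the restriction $R\leq(1-H(q))/2$ to reach $(1-H(q))^3/128$. The only nit is a sign slip in your $d=1$ case (the distance to the upper threshold is $(\frac{1}{8}-\frac{q}{4})m_1$, not $(\frac{1}{8}+\frac{q}{4})m_1$), which does not change your final bound.
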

\begin{proof}

\noindent {\it Probability of error for multiplicity testing:}
The outputs of each individual test in the multiplicity testing are i.i.d. Bernoulli random variables, with  mean depending on the value of $d$ and $q$.
There are three possible error events for multiplicity testing.
\begin{enumerate}
\item The true value of $d$ equals $0$, but GROTESQUE estimates it to be $\geq 1$. In this scenario the expected number of positive outcomes is $qm_1$. To decide on the value of $d$ (as either $0$ or $1$) the threshold that the multiplicity tester (given in Equation \ref{eq:1} is $\frac{1}{2}\left ( q + \frac{1}{2}\right )m_1$. Hence the multiplicity tester makes an error if the true number of positive outcomes exceeds the expected number by $x_1 = \frac{1}{2}\left ( q + \frac{1}{2}\right )m_1 - qm_1 = \frac{m_1}{2}\left ( \frac{1}{2} - q\right )$.
\item The true value of $d$ equals $1$, but GROTESQUE estimates it to be either $0$ or at least $2$. In this scenario the expected number of positive outcomes is $m_1/2$. To decide on the value of $d$ (as either $1$ or not) the closest threshold that the multiplicity tester (given in (\ref{eq:1}) is $\left (\frac{5}{8} -  \frac{q}{4}\right )m_1$. Hence the multiplicity tester may make an error if the true number of positive outcomes differs from the expected number by $x_2 = \left (\frac{5}{8} -  \frac{q}{4}\right )m_1 - \frac{m_1}{2}= \frac{m_1}{4}\left ( \frac{1}{2} - q\right )$.
\item The true value of $d$ is greater than or equal to $2$, but GROTESQUE estimates it to be either $0$ or $1$. In this scenario the expected number of positive outcomes is at least $\left (\frac{3}{4} -  \frac{q}{2}\right )m_1$. To decide on the value of $d$ (as either $\geq 2$ or not) the closest threshold that the multiplicity tester (given in (\ref{eq:1}) is $\left (\frac{5}{8} -  \frac{q}{4}\right )m_1$. Hence the multiplicity tester may make an error if the true number of positive outcomes differs from the expected number by $x_3 = \left (\frac{3}{4} -  \frac{q}{2}\right )m_1 - \left (\frac{5}{8} -  \frac{q}{4}\right )m_1 = \frac{m_1}{4}\left ( \frac{1}{2} - q\right )$.
\end{enumerate}

We now use the additive form of the Chernoff bound, which states that the probability of a $m_1$ i.i.d. copies of a binary random variable differing its expected value by more than $x$ is at most $2^{-2x^2/m_1}$. Noting that $x_1 > x_2 = x_3$ in the three cases analyzed, we have the desired bound on the probability of error.

\vspace{0.15in}
\noindent {\it Probability of error for localization testing:}
This error event corresponds to the scenario when there is exactly one defective item and
we claim that $d=1$, but we locate the wrong defective item. Here we use the exponentially small outer bound on the probability of error of expander codes, as shown in Theorem~\ref{thm:expandercode}, to bound our probability of error. The probability of error of the expander code is at most

%\sh{The analysis is as follows:}

\begin{eqnarray}\nonumber
P_{e}({\cal C},q) & \leq & 2^{-\alpha m_{2}f(R,q)}\\\nonumber
 & = & 2^{-\alpha m_{2}\max_{R\leq R_{0}\leq1-H(q)}E(R_{0},q)(R_{0}-R)/2-\epsilon}\\
 & \leq & 2^{-\alpha m_{2}(1-H(q)-R_{0})^{2}(1-H(q)-R)/4}\\\nonumber
 & = & 2^{-\alpha m_{2}(1-H(q)-R)^{3}/16}\nonumber\\
 & \leq & 2^{-\alpha m_{2}(1-H(q))^{3}/128}\nonumber,
\end{eqnarray}

\noindent where for the middle inequality we choose $R_{0}=\left(R+1-H(q)\right)/2$, and
the last inequality follows by choosing the rate $R$ of our expander code to never be greater than $(1-H(q))/2$.

Based on the above analysis, setting $m_{1}={\cal O}(\log D)$
and {$m_{2}={\cal O}(\log N)$}, we can guarantee that the error probability
of GROTESQUE test is {${\cal O}(1/poly(N)+1/poly(D))$}. {And in the setting $D={ \Theta}(N^{1-\delta})$ which is of primary interest, the error probability scales ${\cal O}(1/poly(D))$.}
\hfill \end{proof}

\noindent $b)$ {\it Decoding complexity:}\label{subsec:grotesquedecoding}

%\subsubsection{Decoding complexity}\label{subsec:grotesquedecoding}

{Multiplicity testing only involves counting the total number of positives from $m_1$ tests, and hence the complexity is ${\cal O}(m_{1})$.}

{Localization testing involves decoding an expander code of block-length $m_2$, which is ${\cal O}(m_{2})$ by Theorem~\ref{thm:expandercode}.}

\section{Adaptive Group Testing}\label{sec:adaptive}
\begin{table*}[ht!]
\centering{}
%\scriptsize{{\small{} }%
\begin{tabular}{l|l}
\hline
\multicolumn{2}{l}{Adaptive Algorithm}\tabularnewline
\hline
${\cal S}$ & Left subset, ${\cal S}\subseteq{\cal N}$\tabularnewline
$\deg_{{\cal S}}(i)$ & The number of neighbors in left subsets\tabularnewline
${\cal D}$-zero nodes & $\{\mbox{right nodes \ensuremath{i}:}\deg_{{\cal D}}(i)=0\}$\tabularnewline
${\cal D}$-leaf nodes & $\{\mbox{right nodes \ensuremath{i}:}\deg_{{\cal D}}(i)=1\}$\tabularnewline
${\cal D}$-non-leaf nodes & $\{\mbox{right nodes \ensuremath{i}:}\deg_{{\cal D}}(i)\geq2\}$\tabularnewline
$T$ & The total number of stages\tabularnewline
$\Delta^{(t)}$ & The set of unrecovered defectives before the $t$-th stage tests are
performed, \tabularnewline
&$\Delta^{(1)}={\cal D}$ , $t\in\{1,\ldots,T\}$\tabularnewline
$d^{(t)}$ & The number of unrecovered defectives before the $t$-th stage tests
are performed, \tabularnewline
& $d^{(t)}=|\Delta^{(t)}|$, $t\in\{1,\ldots,T\}$\tabularnewline
%$c$ & \tabularnewline
%$c'$ & \tabularnewline
${\cal G}^{(t)}$ & The bipartite graph for the $t$-th stage, $t\in\{1,\ldots,T\}$\tabularnewline
%$\epsilon$ & \tabularnewline
%$\Gamma$ & \tabularnewline
%$d$ & \tabularnewline
%$X_{i}$ & \tabularnewline
$r^{(t)}$ & The number of defectives recovered in $t$-th stage, $r^{(t)}=d^{(t)}-d^{(t+1)}$, $t\in\{1,\ldots,T-1\}$\tabularnewline
\hline
\end{tabular}
\caption{Table of notation used in our adaptive algorithm}
%}
\end{table*}

In this section, we consider the {\em adaptive group testing} problem. The objective here is to determine an unknown set $\mathcal{D}$ of $D$ defective items from a collection $\mathcal{N}$ of size $N$. In this setting, we are allowed to perform tests sequentially in an adaptive manner, {\em i.e.}, the subset tested in each test may depend on the outcome of previous tests. As {stated earlier}, we assume that each test outcome may be incorrect independently with probability $q$.

\subsection{Overview}
\begin{figure*}[ht]
\centering\label{fig:adaptive}
\includegraphics[width=0.8\linewidth]{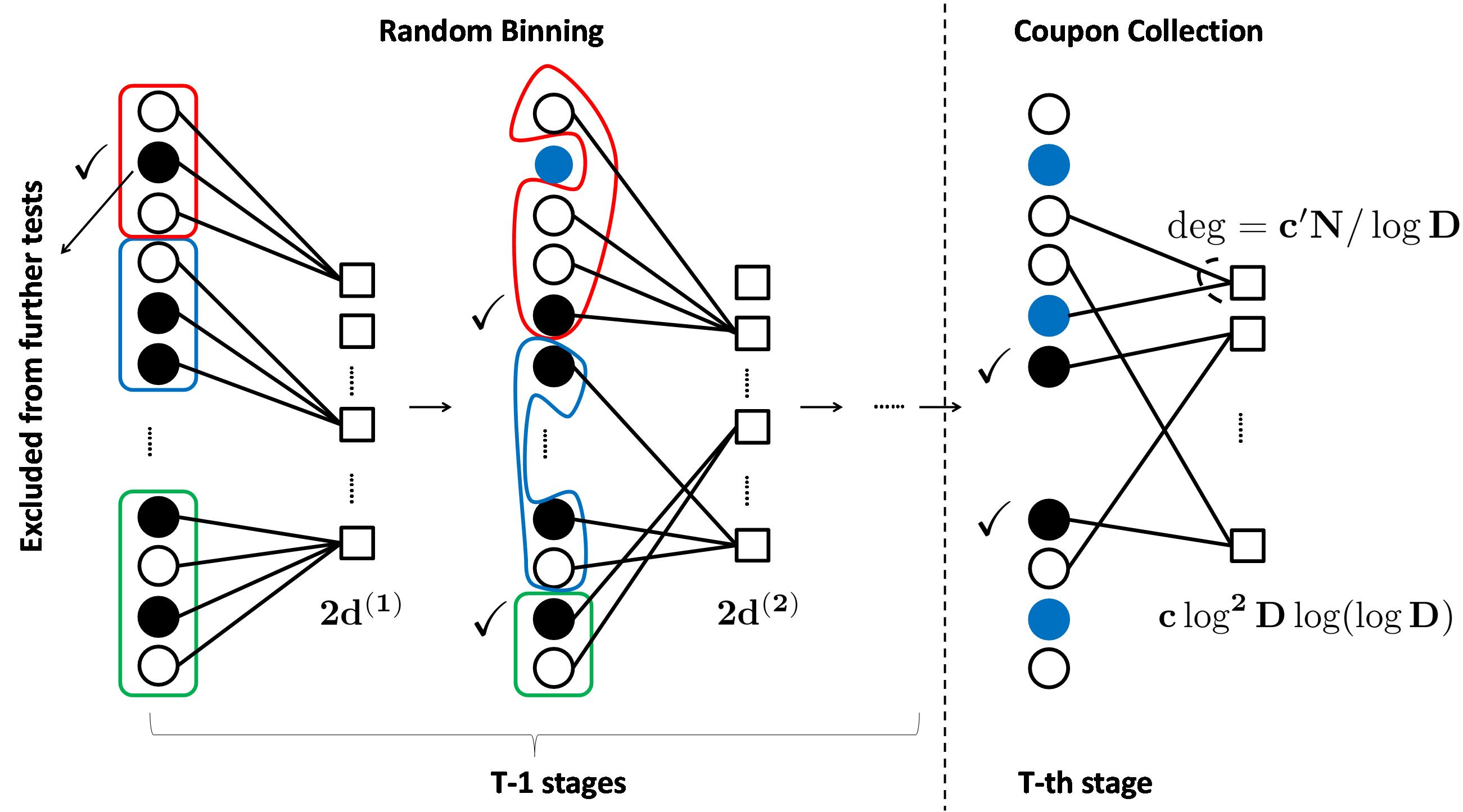}\caption{In each $t$-th stage ($t\leq T-1$), we generate a bipartite graph
with $N$ {nodes} on the left representing $N$ items and $2d^{(t)}$
nodes on the right. The black circular nodes represent defective items and the white
ones represent non-defective items. Each bipartite graph is left-regular with left-degree equal to $1$. The left nodes of each such graph are partitioned randomly -- different {coloured collections} show different ``pools'' within a partition. Nodes in the same pool connect
to the same right node. Each right node passes the items connected to it to the GROTESQUE tests. If there exists only one defective in one pool, then
GROTESQUE locates the defective item with high probability. For example, the second node in
the red partition of first graph will be detected by the GROTESQUE tests at the first right
node with high probability. In the next iteration, we exclude decoded defective items ({colored blue}) and
use a similar decoding process. Finally, in the $T$-th stage, we
generate a bipartite graph with $c(\log^{2}D)(\log(\log D))$ right nodes. For each right node, we pick its $c'N\log D$ neighbouring left nodes by choosing uniformly from all left nodes with replacement. By the coupon
collection argument, with high probability, we can decode the remaining
undecoded defective items. }
\end{figure*}

{Our algorithm has $\cO{\log (D)}$ adaptive stages.}
%Our algorithm runs over an expected number of $\cO{\log{D}}$ adaptive stages.
In all except the last stage, we design our tests so that with a high probability, in each stage, we recover a constant fraction of the remaining defectives. In each of these stages, the number of tests is roughly proportional to the number of remaining defectives. Thus, the total number of tests is $\cO{D\log{N}}$. In each of these stages, the tests are designed by first removing the defectives recovered so far, partitioning the set of remaining items into twice as many sets as the number of remaining defectives, performing GROTESQUE tests on each set from the partition. In our analysis, to guarantee that in each stage we recover a constant fraction of the remaining defectives with a high enough probability, we apply concentration inequalities. This works only when the number of unrecovered defectives be at least $\Omega(\log{D})$. Thus, we move on to the last stage when all but $\log{D}$ defectives have been recovered.

In the last stage, {we use the coupon collection problem~\cite{MotR:95} as a primitive, to identify the remaining defectives.} First we remove the set of defectives already recovered from the set of items being tested. Next, we design $\cO{\log{D}\log{\log{D}}}$ non-adaptive tests by picking random subsets of an appropriate size and performing GROTESQUE tests for that subset. We view the collection of outcomes from these sets of GROTESQUE tests as a random process that generates one independent ``coupon'' with constant probability each time. Thus, $\cO{\log{D}\log{\log{D}}}$ {such tests} suffice by standard coupon collector arguments.

Overall, our algorithm requires $\cO{D\log{N}}$ tests and runs in $\cO{D\log{N}}$ steps.

%%\subsection{Formal description of the algorithm}
%%
%%
%%\subsubsection{Coupon Collector Based Adaptive Algorithm}
%%
%
%%
%Instead of using $D\log D$ right nodes, in each step, we put just
%%
%one right node with the same right regularity equal to $n/D$. We
%%
%know that with high probability, within constant steps, the right
%%
%node is a ${\cal D}-leaf$ node. After decoding the corresponding
%%
%defective item, we remove the defective one and put a new right node
%%
%to do the decoding similarily. We claim that with high probability,
%%
%we can accomplish the whole decoding process within ${\cal O}(D)$
%%
%steps.
%%

%%\subsubsection{Multistage Adaptive Algorithm}
%%
%
%%
%With $D$ nodes on the right, we can decode at least a constant fraction
%%
%of defective item. Then, in the next step, we can use less than $D$
%%
%right nodes to do the decoding. Overall, we claim that we can use
%%
%${\cal O}(D)$ number of right nodes to decode all defective items
%%
%with high probability.
%%

\subsection{Formal Description}\label{sec:formal-adaptive}
In this section, we describe {an} Adaptive Group Testing algorithm that achieves the following guarantees.
\begin{theorem} The Adaptive Group Testing algorithm described below has the following properties:

1) With probability {$1-{\cal O}(1/poly{(D)})$} over the choice of the random bipartite graph, the algorithm produces a reconstruction of the collection
$\hat{{\cal D}}$ of ${\cal D}$ such that $\hat{\mathbf{{\cal D}}}={\cal D}$.

2) The number of tests $M$ equals ${\cal O}(D\log N)$ .

3) The number of stages is $\cO{\log(D)}$.

4) The decoding complexity is ${\cal O}(D(\log(N))$.
\end{theorem}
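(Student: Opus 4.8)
The approach is to treat the two phases separately --- the $T-1=\cO{\log D}$ \emph{random-partition} stages and the single final \emph{coupon-collection} stage --- bound the failure probability of each, and obtain the four claims by union bounds over stages and over the individual GROTESQUE calls, using Lemma~\ref{lem:errorgrotesque} (per-call error $\cO{1/poly(D)}$ once $m_1=\cO{\log D}$, $m_2=\cO{\log N}$) as a black box. For a generic partition stage $t\le T-1$ I would condition on the set $\Delta^{(t)}$ of $d^{(t)}$ unrecovered defectives and, inductively, on the event that everything recovered so far is genuinely defective. The $N-(D-d^{(t)})$ surviving items are thrown uniformly into $2d^{(t)}$ pools, so a fixed defective is alone in its pool with probability $(1-1/(2d^{(t)}))^{d^{(t)}-1}\ge e^{-1/2}-o(1)$, whence the expected number of ``$\Delta^{(t)}$-leaf'' pools (those with exactly one defective) is at least $\gamma d^{(t)}$ for an absolute constant $\gamma>0$. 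Relocating a single defective changes this count by $\cO{1}$, so McDiarmid's inequality~\cite{Mcd:89} gives at least $(\gamma/2)d^{(t)}$ leaf pools except with probability $\exp(-\Theta(d^{(t)}))$; on that event, Lemma~\ref{lem:errorgrotesque} and a union bound over the $2d^{(t)}$ pools show that, except with probability $2d^{(t)}\cdot\cO{1/poly(D)}$, GROTESQUE reports ``$1$'' and the correct index on every leaf pool and ``$\ne 1$'' on every non-leaf pool, so $d^{(t+1)}\le(1-\gamma/2)d^{(t)}$ and no spurious item is ever added.

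\textbf{Finishing and assembling claim (1).} We keep running partition stages while $d^{(t)}\ge c_0\log D$ for a suitable constant $c_0$ (capping the number of stages at $\cO{\log D}$); the geometric decay makes this take $\cO{\log D}$ stages, which is claim~(3), and since on each such stage $\exp(-\Theta(d^{(t)}))$ and $\cO{d^{(t)}/poly(D)}$ are both $\cO{1/poly(D)}$, the union bound over the $\cO{\log D}$ stages still leaves failure probability $\cO{1/poly(D)}$ (choosing the hidden constants in $m_1$ and in ``$poly$'' large enough). In the last stage we have $d^{(T)}\le c_0\log D$, and the plan is to form $L=\Theta(\log^2 D\log\log D)$ pools, each a uniform-with-replacement sample of $\Theta(N/d^{(T)})$ items, feeding each to GROTESQUE. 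Each pool independently contains exactly one defective with probability at least an absolute $p_0>0$, and by symmetry that defective is uniform over $\Delta^{(T)}$, so each GROTESQUE call ``collects'' a uniformly random outstanding defective with probability $\Omega(1)$; the standard concentration bound for the coupon-collector process~\cite{MotR:95} then shows $L$ calls collect all $d^{(T)}\le c_0\log D$ of them except with probability $\cO{1/poly(D)}$, while the chance that any of these $L=poly(\log D)$ calls errs is $L\cdot\cO{1/poly(D)}=\cO{1/poly(D)}$. Combining the two phases yields $\hat{{\cal D}}={\cal D}$ with probability $1-\cO{1/poly(D)}$, proving claim~(1).

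\textbf{Test count and complexity (claims 2 and 4).} Each GROTESQUE call uses $m_1+m_2=\cO{\log D+\log N}=\cO{\log N}$ tests (since $D=\cO{N^{1-\delta}}$ forces $\log D=\cO{\log N}$) and $\cO{\log N}$ decoding time by Section~\ref{subsec:grotesquedecoding}. Conditioned on the success event of claim~(1) the $d^{(t)}$ decay geometrically, so the partition stages make $\sum_{t\le T-1}2d^{(t)}=\cO{D}$ calls and the final stage makes $L=o(D)$ more, for $\cO{D}$ calls in total; this gives $M=\cO{D\log N}$, and, together with the $\cO{D\log N}$ bookkeeping of the recovered-defective list and of the pool memberships --- which we store implicitly via a pairwise-independent hash rather than by an explicit $\cO{N}$-time partition --- decoding complexity $\cO{D\log N}$. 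The one genuinely delicate point is the ``small-numbers'' issue flagged in the overview: the McDiarmid bound $\exp(-\Theta(d^{(t)}))$ degrades to a useless constant exactly when $d^{(t)}$ has fallen to $\Theta(\log D)$, which is why the partition step cannot be iterated to completion and must hand over to coupon collection, and making the threshold $c_0\log D$, the coupon count $L$, and the degrees of the various ``$poly(D)$''s mutually consistent so that every union bound still yields $\cO{1/poly(D)}$ is where the care is needed.
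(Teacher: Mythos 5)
Your proposal follows essentially the same route as the paper: the same split into $\cO{\log D}$ random-binning stages analyzed via the leaf-pool probability $(1-1/(2d^{(t)}))^{d^{(t)}-1}\ge e^{-1/2}$ plus McDiarmid concentration, a hand-off to a coupon-collection stage with $\Theta(\log^2 D\log\log D)$ pools of size $\Theta(N/\log D)$ once only $\cO{\log D}$ defectives remain, GROTESQUE treated as a black box with per-call error $\cO{1/poly(D)}$ and $\cO{\log N}$ tests and decoding time, and the identical accounting giving $\cO{D}$ total GROTESQUE calls, hence $\cO{D\log N}$ tests and decoding complexity. Your explicit threshold $c_0\log D$ and the attention to keeping all union bounds at $\cO{1/poly(D)}$ are if anything slightly more careful than the paper's Corollary, but the argument is the same.
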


Let $\Delta^{(t)}$, and $d^{(t)}$, respectively be the set, and the number, of unrecovered defectives before the $t$-th stage tests are performed. Note that $\mathcal{D}=\Delta^{(1)}\supseteq\Delta^{(2)}\ldots$ and $D=d^{(1)}\geq d^{(2)}\geq\dots$ {since we recover some defectives in each stage}. Let $T=1+\inf\{t:d^{(t)}\leq\log{D}\}$ denote the number of stages after which (with high probability) {the number of unrecovered defectives is no larger than $\log D$}.
{For any right node $i$, and subset of left nodes ${\cal S\subseteq{\cal N}}$ we define $\deg_{{\cal S}}(i)$ as the number of left nodes that neighbor the right node $i$. There are three types
of right nodes: {\it ${\cal D}$-leaf nodes}, {\it ${\cal D}$-zero nodes} and {\it ${\cal D}$-non-leaf nodes}. A ${\cal D}$-leaf
node is a right node $i$ with $\deg_{{\cal D}}(i)=1$ (it is connected to a single defective item), a ${\cal D}$-zero
node is a right node $i$ with $\deg_{{\cal D}}(i)=0$ (it is connected to no defective items), and a ${\cal D}$-non-leaf
node is a right node $i$ with $\deg_{{\cal D}}(i)\geq2$ (it is connected to multiple defective items).
Specifically, ${\cal D}$-leaf nodes are very helpful in our quick decoding process since
our GROTESQUE black-box shall with high probability correctly output the location of a defective item if there exists exactly one defective item among its neighbours.

\noindent{\bf Note:} {Even though in all our algorithms, both ${\cal D}$-zero nodes and $\cal D$-non-leaf nodes contain ``potentially useful'' information, we do not use them for decoding. This is because we require our algorithms to work with computational complexity that is comparable (up to constant factors for the adaptive algorithm, and at most a logarithmic factor in the other algorithms) to the size of the output of our algorithm (that is, $\cO{D\log(N)}$). To run this ``blindingly fast'', we need our algorithms to output ``something interesting'' in ``very little'' time. So, for instance, while the ${\cal D}$-zero nodes tell us which inputs are non-defective, using this information takes ``too long'' (since it essentially tells us which items are {\it not interesting} rather than those which {\it are} interesting, but there are many more non-interesting items than interesting ones).  }

\noindent $a)$ {\it Test design:}

{Conceptually, we design the first $T-1$ stages of our adaptive algorithm using the idea of ``random binning'', and the $T$-th stage using that of ``coupon collection''.}

\begin{itemize}

\item \noindent{\bf\underline{\em Random binning}:} For each stage $t=1,\ldots,T-1$, we consider a random left regular bipartite graph $\mathcal{G}^{(t)}$ with $2d^{(t)}$ right nodes and $N-(D-d^{(t)})$ left nodes corresponding to the set $(\mathcal{N}\setminus{D})\cup \Delta^{(t)}$, {\em i.e.,} all items except those already recovered in the previous $t-1$ stages. We let each left node of $\mathcal{G}^{(t)}$ be of degree one. We {choose} this graph uniformly at random from all possible bipartite graphs having mentioned properties above. Next, {for each right node of $\graph^{(t)}$, we use its set of left neighbours as the input for a GROTESQUE test. In each stage, for each right node of $\graph^{(t)}$, if GROTESQUE detects it has multiplicity one, it decodes the corresponding defective item. Else if GROTESQUE identifies a right-node as having multiplicity greater than one or zero, it does not further use these test outcomes.}

\item \noindent{\bf\underline{\em Coupon collection}:} In the final stage, we consider a left regular bipartite graph $\mathcal{G}^{(T)}$ with left node set $(\mathcal{N}\setminus\mathcal{D})\cup\Delta^{(T)}$ and $c(\log{D})^2\log{\log{D}}$ right nodes. For each right node, we choose its $c'N/\log{D}$ neighbours independently and uniformly at random with replacement. Next, we design $\cO{\log{N}}$ GROTESQUE tests at each right node to test for defectives among its $\cO{N}$ left neighbours.
\end{itemize}

\noindent $b)$ {\it Decoding algorithm:}\label{sec:adaptivealgo}

%\subsubsection{Decoding algorithm}\label{sec:adaptivealgo}
{The decoding for each stage corresponds to detection of leaf nodes in that stage and corresponding localization via GROTESQUE tests. Specifically, in each of the $t$-th stages for $t \leq T$, we sequentially pass the outputs of each of the right nodes of $\graph^{(t)}$ to GROTESQUE, which identifies leaf nodes and localized the corresponding defective items. Note that the structure of $\graph^{(T)}$ at the $T$-stage differs from the structure of $\graph^{(t)}$ for $t < T$, since they are chosen according to different processes (coupon collection versus random binning). Nonetheless, the same decoding procedure (leaf detection and corresponding localization of defectives) is performed in both the first $T-1$ stages, and th $T$th stage. The algorithm makes an error if not all defective items have been localized by the $T$-th stage (one can test for this by passing the set of remaining items to GROTESQUE's multiplicity testing subroutine).}

\subsection{Performance Analysis}
To analyze the performance of the first part of the algorithm, we require the following lemma.
\begin{lemma}\label{lem:mcdiarmid}
Let $\mathcal{G}$ be a random bipartite graph with $n$ nodes on the left, and $2d$ nodes on the right side. Let $\Delta$ be a subset of the left nodes of size $d$. Also, let each node on the left side of $\mathcal{G}$ have degree one. Then, for any $\epsilon>0$, with probability at least $1-\exp(-\epsilon^2d/2)$, {at least} $(e^{-1/2}-\epsilon)d$ nodes on the left are connected to $\Delta$-leaf nodes.
\end{lemma}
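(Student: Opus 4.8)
The plan is to model the left-degree-one bipartite graph as a balls-and-bins process: each of the $d$ defective left-nodes in $\Delta$ independently (since left-degree is one and the graph is uniform over all such graphs) throws a ``ball'' into one of the $2d$ right-node ``bins'' chosen uniformly at random. A right node is a $\Delta$-leaf node precisely when its bin receives exactly one ball. A defective left node is ``connected to a $\Delta$-leaf node'' precisely when it is the unique ball in its bin. So the quantity of interest, call it $Z$, is the number of balls that land alone in their bin, and I want to show $Z \geq (e^{-1/2}-\epsilon)d$ with the stated probability. The non-defective left nodes are irrelevant to this event and can be ignored entirely.

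First I would compute $\mathbb{E}[Z]$. For a fixed defective $j\in\Delta$, the probability that no other defective shares its bin is $(1-1/(2d))^{d-1}$, and by linearity $\mathbb{E}[Z] = d\,(1-1/(2d))^{d-1}$. The elementary bound $(1-1/(2d))^{d-1}\geq e^{-1/2}$ (which follows from $(1-x)^{d-1}\geq e^{-x(d-1)/(1-x)}$ with $x=1/(2d)$, or more simply from monotonicity of $(1-1/(2d))^{2d}\uparrow e^{-1}$ together with $(1-1/(2d))^{d-1}\geq (1-1/(2d))^d \geq \sqrt{(1-1/(2d))^{2d}}\geq e^{-1/2}$ for $d$ not too small) gives $\mathbb{E}[Z]\geq e^{-1/2}d$. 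So it remains to establish concentration of $Z$ around its mean within an additive slack of $\epsilon d$.

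For the concentration step I would apply McDiarmid's bounded-differences inequality~\cite{Mcd:89}, viewing $Z = Z(B_1,\dots,B_d)$ as a function of the $d$ independent bin-assignments $B_j$ of the defectives. Changing a single $B_j$ can change $Z$ by at most $2$: node $j$ may switch between being alone and not, and at most one previously-alone node in $j$'s new bin, or one node in $j$'s old bin, can have its status flip (more carefully, moving one ball out of bin $a$ into bin $b$ affects the ``alone'' status of at most $j$ itself plus one node in $a$ plus one node in $b$, but a short case check shows the net change in $Z$ is bounded by $2$ in absolute value). With bounded differences $c_j = 2$, McDiarmid gives $\Pr[Z \leq \mathbb{E}[Z] - t] \leq \exp(-2t^2/(\sum_j c_j^2)) = \exp(-t^2/(2d))$; taking $t = \epsilon d$ yields $\Pr[Z \leq (e^{-1/2}-\epsilon)d] \leq \exp(-\epsilon^2 d/2)$, as claimed.

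The main obstacle I anticipate is pinning down the bounded-difference constant: one must argue carefully that a single reassignment $B_j \mapsto B_j'$ changes $Z$ by at most $2$, since naively one might fear that emptying a crowded bin down to one ball could create a new ``alone'' node while simultaneously $j$ becomes alone elsewhere — but that is exactly the extreme case giving difference $2$, and no configuration does worse. A secondary minor point is verifying $(1-1/(2d))^{d-1}\geq e^{-1/2}$ uniformly in the relevant range of $d$; this is a routine inequality but should be stated cleanly so the lemma holds for all $d\geq 1$ (for very small $d$ the probability bound is vacuous anyway since $\exp(-\epsilon^2 d/2)$ is close to $1$).
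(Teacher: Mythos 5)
Your proposal is correct and takes essentially the same route as the paper's proof: lower-bound the per-defective leaf probability by $e^{-1/2}$ to get $\mathbf{E}[Z]\ge e^{-1/2}d$, then apply McDiarmid's inequality to the $d$ independent bin-assignment variables with bounded-difference constant $2$, giving $\exp\left(-2(\epsilon d)^2/(4d)\right)=\exp(-\epsilon^2 d/2)$. One small caveat: your parenthetical ``more simply'' justification of the expectation bound is backwards, since $(1-1/(2d))^{2d}<e^{-1}$ (this sequence increases to $e^{-1}$ from below); your primary bound via $\log(1-x)\ge -x/(1-x)$ is fine, and the paper instead uses the standard inequality $(1-1/x)^{x-1}>e^{-1}$ with $x=2d$.
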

\begin{proof}
We first note that the probability that a node $j\in\Delta$ on the left {of the bipartite graph is} connected to a $\Delta$-leaf {node} is:
\begin{eqnarray*}
\Pr(j\mbox{ is connected to a $\Delta$-leaf})&=&\left(\frac{2d-1}{2d}\right)^{d-1}\\
&>&\left(1-\frac{1}{2d}\right)^{(2d-1)/2}>e^{-1/2},\\
\end{eqnarray*}
%\begin{eqnarray*}
%\Pr(j\mbox{ is connected to a $\Delta$-leaf})&=&\left(\frac{2d-1}{2d}\right)^{d-1}\\
%&=&\left(1-\frac{1}{2d}\right)^{d-1}\\
%&>&\left(1-\frac{1}{2d}\right)^{(2d-1)/2}\\
%&>&e^{-1/2},
%\end{eqnarray*}
where the last inequality comes from the well-known inequality: $\left(1-\frac{1}{x}\right)^{x-1}>e^{-1}$.
%\begin{eqnarray}
%\left(1-\frac{1}{x}\right)^{x-1}>e^{-1}\nonumber
%\end{eqnarray}
{Therefore the expected number of nodes from $\Delta$ that are connected to $\Delta$-leaf nodes is $e^{-1/2}d$.} Next, we show a concentration result for this number by applying McDiarmid's inequality~\cite{Mcd:89} the statement of which is reprised in Appendix A as follows. First, we label nodes on the right with numbers {from} $1$ to $2d$ and for each {$i\in 1,2,\ldots,d$}, let
%\begin{eqnarray*}
%X_i&\triangleq&  \text{ label of the node on the right}\\
% & & \text{ which is connected to } i\text{ th defective.}
%\end{eqnarray*}
\[
{\mathbf{X_{i}}}\triangleq\mbox{label of the node on the right which is connected to the \ensuremath{i}-th node in $\Delta$.}
\]

Also, define ${f:\mathbf{X_1}\times \mathbf{X_2} \times \cdots \times \mathbf{X_d} }\rightarrow \mathbb{Z}$ as the number of left nodes connected to a $\Delta$-leaf node {\it i.e.},
\begin{eqnarray}
f(x_1,x_2,\cdots,x_d)&\triangleq &|\{x_1,x_2,\cdots,x_d\}|
\end{eqnarray}
%\begin{eqnarray}
%f(x_1,x_2,\cdots,x_d)&\triangleq &|\{x_1,x_2,\cdots,x_d\}|\\ \nonumber
%&=&\mbox{number of left nodes connected }\nonumber\\ \nonumber
%&&\mbox{to a }\Delta\mbox{-leaf node.}\nonumber
%\end{eqnarray}

It is observed that for any fixed $x_1,\cdots,x_{i-1},x_{i+1},\cdots,x_d$ and any $x_i,x'_i\in \mathbf{X_i}$,
\begin{eqnarray}
|f(x_1,\cdots,x_i,\cdots,x_d)-f(x_1,\cdots,x'_i,\cdots,x_d)|\leq 2\nonumber
\end{eqnarray}

For example, $\exists i,j(\neq i)$, $x_{i}=x_{j}$ and $x_{i}'\neq x_{i}$. It's possible
that $x_{j}$-th and $x_{i}'$-th right nodes which initially are
not $\Delta$-leaf nodes become $\Delta$-leaf nodes. Then, $f(x_1,\cdots,x_i,\cdots,x_d)-f(x_1,\cdots,x'_i,\cdots,x_d)= -2$. In fact, we can numerate all possible cases to conclude out result.

It is clear that $X_i$'s are independent. Therefore, by McDiarmid's inequality we have:
%\begin{eqnarray}
%\Pr\left(|f(X_1,X_2,\cdots,X_d)-{\bf E}f(X_1,X_2,\cdots,X_d)|>\epsilon d\right)\leq\exp\left(-\epsilon^2d/2\right).\nonumber
%\end{eqnarray}
%Thus,
\begin{eqnarray}
\Pr\left(f(X_1,X_2,\cdots,X_d)-{\bf E}f(X_1,X_2,\cdots,X_d)<-\epsilon d\right)\leq\exp\left(-\epsilon^2d/2\right).\nonumber
\end{eqnarray}
Hence,
\begin{eqnarray*}
\Pr\left(f(X_{1},X_{2},\ldots,X_{d})-e^{-1/2}d<-\epsilon d\right) & \leq & \Pr\left(f(X_{1},X_{2},\ldots,X_{d})-{\bf E}f(X_{1},X_{2},\ldots,X_{d})<-\epsilon d\right)\\
 & \leq & \exp\left(-\epsilon^{2}d/2\right)
\end{eqnarray*}
%Next, for any right node, we note that the expected degree is $n/2d$. Thus, the probability of its degree being at least $3n/4d$ is at most $\exp(-n/32d^2)/2$ by Chernoff Bound.
\hfill \end{proof}
\begin{corollary}\label{cor:adaptive} With probability $1-\cO{\exp{(-D)}+(\log{D})/N}$, in each of the first $T-1$ stages, our decoding algorithm recovers at least a $e^{-1/2}/2$ fraction of defectives that have not been decoded up to that stage.
\end{corollary}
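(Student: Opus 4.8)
The plan is to deduce the corollary from Lemma~\ref{lem:mcdiarmid} applied once per stage, from Lemma~\ref{lem:errorgrotesque} to control the GROTESQUE tests, and from a union bound over the $\cO{\log D}$ stages. First I would fix a stage $t\in\{1,\ldots,T-1\}$ and condition on the history $\mathcal{H}_{t-1}$ of the first $t-1$ stages (all graphs, all test outcomes, and all items decoded so far). This history determines $\Delta^{(t)}$ and $d^{(t)}=|\Delta^{(t)}|$, and since $T=1+\inf\{t:d^{(t)}\le\log D\}$ we have $d^{(t)}>\log D$ for every $t\le T-1$. The key point is that, given $\mathcal{H}_{t-1}$, the stage-$t$ graph is drawn afresh, uniformly and independently over the left-regular bipartite graphs on $(\mathcal{N}\setminus\mathcal{D})\cup\Delta^{(t)}$ with $2d^{(t)}$ right nodes; hence Lemma~\ref{lem:mcdiarmid} applies conditionally with $n=N-(D-d^{(t)})$, $d=d^{(t)}$ and $\Delta=\Delta^{(t)}$.

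Next, choosing a constant $\epsilon\in(0,e^{-1/2}/2)$ (say $\epsilon=e^{-1/2}/4$), Lemma~\ref{lem:mcdiarmid} gives: with conditional probability at least $1-\exp(-\epsilon^2 d^{(t)}/2)$, at least $(e^{-1/2}-\epsilon)d^{(t)}=\frac{3}{4}e^{-1/2}d^{(t)}$ of the defectives in $\Delta^{(t)}$ are the unique defective in their pool, i.e.\ are attached to $\Delta^{(t)}$-leaf nodes. Each such defective is localized correctly by the GROTESQUE test on its pool unless that test errs; by Lemma~\ref{lem:errorgrotesque}, with $m_1=\Theta(\log D)$ and $m_2=\Theta(\log N)$ (and large enough leading constants) a single GROTESQUE test errs with probability $\cO{1/poly(D)+1/poly(N)}$, the polynomial degrees being at our disposal. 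A union bound over the at most $2d^{(t)}\le 2D$ GROTESQUE tests of stage $t$ then shows that with probability $1-\cO{1/N}$ (the localization contribution alone being driven down to $\cO{D/N^2}$) every GROTESQUE test in the stage is correct. On the intersection of these two events, all $\frac{3}{4}e^{-1/2}d^{(t)}\ge\frac{1}{2}e^{-1/2}d^{(t)}$ of those defectives are recovered, so, uniformly over histories $\mathcal{H}_{t-1}$ with $d^{(t)}>\log D$,
\[
\Pr\!\left[\,\mbox{stage }t\mbox{ recovers fewer than }(e^{-1/2}/2)\,d^{(t)}\mbox{ defectives}\ \mid\ \mathcal{H}_{t-1}\,\right]\ \le\ \exp(-\epsilon^2 d^{(t)}/2)+\cO{1/N}.
\]

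To finish I would let $\tau$ be the first stage $t\le T-1$ (if any) at which fewer than $(e^{-1/2}/2)d^{(t)}$ defectives are recovered. On the event $\{\tau=t\}$ stages $1,\ldots,t-1$ all succeeded, so $d^{(t)}\le(1-e^{-1/2}/2)^{t-1}D$; combined with $d^{(t)}>\log D$ this forces $t\le t_{\max}:=\cO{\log D}$. Summing the conditional estimate above over $t=1,\ldots,t_{\max}$,
\[
\Pr\!\left[\,\exists\,t\le T-1:\ \mbox{stage }t\mbox{ fails}\,\right]\ \le\ \sum_{t=1}^{t_{\max}}\left(\exp(-\epsilon^2 d^{(t)}/2)+\cO{1/N}\right).
\]
The $\cO{1/N}$ terms sum to $\cO{(\log D)/N}$; the $t=1$ term is $\exp(-\epsilon^2 D/2)=\cO{\exp(-D)}$, and every later McDiarmid term is $\cO{1/poly(D)}$ because $d^{(t)}>\log D$ there, hence subdominant. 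This yields the asserted probability $1-\cO{\exp(-D)+(\log D)/N}$.

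The step I expect to be the main obstacle is reconciling the adaptivity of the algorithm with the static form of Lemma~\ref{lem:mcdiarmid}: the set $\Delta^{(t)}$ is random and entangled with the entire past, whereas the lemma assumes a fixed subset of fixed size. Conditioning on $\mathcal{H}_{t-1}$ and using the independent re-randomization of the stage-$t$ graph is what makes the conditional law match the lemma's hypotheses, but it forces the per-stage bound to be \emph{uniform} over all admissible $\Delta^{(t)}$, and $\exp(-\epsilon^2 d^{(t)}/2)$ is only small while $d^{(t)}=\Omega(\log D)$ --- which is exactly why the random-binning phase must stop once $d^{(t)}$ drops below $\log D$ and hand over to the coupon-collection stage $T$. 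A secondary nuisance is keeping the GROTESQUE union bound over the $\Theta(D\log D)$ tests inexpensive, which is dealt with by taking the block-lengths $m_1,m_2$ with sufficiently large constants so that each test's failure probability is a high-degree inverse polynomial.
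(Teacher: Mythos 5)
Your proposal follows essentially the same route as the paper: per stage, split the failure event into (i) fewer than the required fraction of $\Delta^{(t)}$ being attached to $\Delta^{(t)}$-leaf nodes, controlled by Lemma~\ref{lem:mcdiarmid}, and (ii) a GROTESQUE error at one of the $2d^{(t)}$ right nodes, controlled by Lemma~\ref{lem:errorgrotesque} plus a union bound; then handle adaptivity by conditioning on the past and sum over the $\cO{\log D}$ random-binning stages. The paper writes this as a chain of conditional probabilities, and your first-failure-time $\tau$ together with explicit conditioning on $\mathcal{H}_{t-1}$ is the same device, articulated if anything more carefully than in the paper.

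The genuine gap is your final step, ``every later McDiarmid term is $\cO{1/poly(D)}$ \ldots hence subdominant.'' For $t\ge 2$ the only lower bound you have on $d^{(t)}$ is $d^{(t)}>\log D$, so those terms are only bounded by $\exp(-\epsilon^{2}\log D/2)=D^{-1/(32e)}$ with your $\epsilon=e^{-1/2}/4$, and summing over $\cO{\log D}$ stages gives $\cO{(\log D)\,D^{-1/(32e)}}$. This is \emph{not} dominated by $\exp(-D)+(\log D)/N$: in the regime $D=\cO{N^{1-\delta}}$ one has $(\log D)/N\le\cO{(\log D)\,D^{-1/(1-\delta)}}$, which is polynomially smaller than $D^{-1/(32e)}$. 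So what your argument actually establishes is $1-\cO{(\log D)\,D^{-c}+(\log D)/N}$ for a small constant $c$, not the stated rate. (A secondary point: with $m_{1}=\Theta(\log D)$ multiplicity tests, the union bound over the $\cO{D}$ right nodes of a stage leaves a multiplicity-error contribution of $\cO{1/poly(D)}$, not $\cO{1/N}$, unless $N=poly(D)$ or one takes $m_{1}=\Theta(\log N)$, which the adaptive algorithm can afford.) To be fair, the paper's own proof is loose at exactly the same place: it substitutes the \emph{upper} bound $d^{(t)}\le D(1-e^{-1/2}/2)^{t-1}$ into the decreasing function $\exp(-d^{(t)}/4e)$ --- the inequality runs the wrong way --- and then asserts the resulting sum is $\cO{\exp(-D)}$ even though its largest term is the last one, of size $D^{-\Theta(1)}$; moreover $\exp(-\Theta(D))$ rather than literally $\cO{\exp(-D)}$ is all either argument gives for $t=1$. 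Indeed, when $d^{(t)}=\Theta(\log D)$ the per-stage failure event genuinely has probability $D^{-\Theta(1)}$ (e.g.\ all $d^{(t)}$ remaining defectives may fall into a fixed quarter of the bins with probability $\exp(-\Theta(d^{(t)}))$), so no tightening of this per-stage analysis can recover the corollary's literal bound; your weaker conclusion is what this argument honestly yields, and it still suffices for the theorem's overall $\cO{1/poly(D)}$ error claim.
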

\begin{proof}The event that we recover fewer than $e^{-1/2}/2$ fraction of defectives in $t$-th stage is a subset of the union of the following two events:

\begin{enumerate}[(1)]
\item Less than a $e^{-1/2}/2$ fraction of defectives are connected to a $\Delta^{(t)}$-leaf node.
\item The outcome of the GROTESQUE tests is incorrect for any of the $2d^{(t)}$ right nodes.
\end{enumerate}
%\begin{enumerate}[(1)]
%\item Less than a $e^{-1/2}/2$ fraction of defectives are connected to a $\Delta^{(t)}$-leaf node.
%\item The outcome of the GROTESQUE tests is incorrect for any of the $2d^{(t)}$ right nodes.
%\end{enumerate}

By Lemma~\ref{lem:mcdiarmid}, with $\Delta=\Delta^{(t)}$, $n=N-d^{(t)}$, and $\epsilon=e^{-1/2}/2$, the probability of event~(1) is at most $\exp(-d^{(t)}/4e)$. Further, by Lemma~\ref{lem:errorgrotesque}, with $n=N-d^{(t)}$, the probability of event~B is $\cO{1/N}$. For each $t$, let $r^{(t)}=d^{(t)}-d^{(t+1)}$. Therefore, the probability that in one of the $T-1$ stages, fewer than a $e^{-1/2}/2$ fraction of defectives are correctly decoded is bounded from above as
%\begin{align*}
%\lefteqn{\Pr\left(\cup_{t=1}^{T-1}\{r^{(t)}<e^{-1/2}d^{(t)}/2\} \right)}\\
%&= \sum_{t=1}^{T-1}\Pr\left(r^{(t)}<\frac{e^{-\frac{1}{2}}d^{(t)}}{2}\left\vert\bigcap_{\tau=1}^{t-1}\left\{r^{(\tau)}>\frac{e^{-\frac{1}{2}}d^{(\tau)}}{2}\right\}\right.\right).\\
%\intertext{Under the conditioning event for the $t$-th term, we may bound $d_t$ from above by $D(1-e^{-1/2}d^{(t)}/2)^{t-1}$. Further, by definition of $T$, there are at most $\log{D}/\log{(1-e^{-1/2}/2)}-1$ terms. The chain of inequalities is furthered as}
%\lefteqn{\Pr\left(\cup_{t=1}^{T-1}\{r^{(t)}<e^{-1/2}d^{(t)}/2\} \right)}\\
%&\leq  \cO{\sum_{t=1}^{{\log{D}}/{\log{(1-\frac{e^{-\frac{1}{2}}}{2})}}-1}\exp{(-D(1-e^{-1/2})^{t-1})}+1/N}\\
%&= \cO{\exp{(-D)}+(\log{D})/N}.
%\end{align*}

%\begin{eqnarray}
%\lefteqn{\Pr\left(\cup_{t=1}^{T-1}\{r^{(t)}<e^{-1/2}d^{(t)}/2\} \right)\nonumber\\
%&=&\sum_{t=1}^{T-1}\Pr\left(r^{(t)}<\frac{e^{-\frac{1}{2}}d^{(t)}}{2}\left\vert\bigcap_{\tau=1}^{t-1}\left\{r^{(\tau)}>\frac{e^{-\frac{1}{2}}d^{(\tau)}}{2}\right\}\right.\right).\nonumber
%\end{eqnarray}
\[
\Pr\left(\cup_{t=1}^{T-1}\{r^{(t)}<e^{-1/2}d^{(t)}/2\} \right)= \sum_{t=1}^{T-1}\Pr\left(r^{(t)}<\frac{e^{-\frac{1}{2}}d^{(t)}}{2}\left\vert\bigcap_{\tau=1}^{t-1}\left\{r^{(\tau)}>\frac{e^{-\frac{1}{2}}d^{(\tau)}}{2}\right\}\right.\right).
\]
Under the conditioning event for the $t$-th term, we may bound $d^{(t)}$ from above by $D(1-e^{-1/2}/2)^{t-1}$. Further, by the definition of $T$, there are at most $\log{D}/\log{(1-e^{-1/2}/2)}-1$ terms. The chain of inequalities is further simplified as
%\[
%\Pr\left(\cup_{t=1}^{T-1}\{r^{(t)}<e^{-1/2}d^{(t)}/2\} \right)\leq \cO{\sum_{t=1}^{{\log{D}}/{\log{(1-\frac{e^{-\frac{1}{2}}}{2})}}-1}\exp{(-D(1-e^{-1/2})^{t-1})}+1/N} = \cO{\exp{(-D)}+(\log{D})/N}.
%\]
\begin{eqnarray}
\lefteqn{\Pr\left(\cup_{t=1}^{T-1}\{r^{(t)}<e^{-1/2}d^{(t)}/2\} \right)}\nonumber\\
&\leq  &\cO{\sum_{t=1}^{{\log{D}}/{\log{(1-\frac{e^{-\frac{1}{2}}}{2})}}-1}\exp{(-D(1-e^{-1/2}/2)^{t-1})}+1/N}\nonumber\\
&=&\cO{\exp{(-D)}+(\log{D})/N}\nonumber
\end{eqnarray}
\hfill \end{proof}

\noindent $a)$ {\it Number of tests:}
%\subsubsection{Number of tests}

In the Random Binning part of the algorithm, there are $2d^{(t)}$ right nodes in the $t$-th stage, each of which  requires $C{d^{(t)}\log{N}}$ tests for some constant $C=C(q)$ (as determined in Section~\ref{sec:grotesque}). Hence the total number of tests for the Random Binning part of the algorithm is $\sum_{t=1}^{T-1} 2Cd^{(t)}\log{N}$.  By Corollary~\ref{cor:adaptive}, with high probability, in each stage the algorithms recovers a constant fraction of the undecoded defectives. Thus, $d^{(t)}=\cO{\alpha^t D}$ for some constant $\alpha\in(0,1)$. Therefore, the total number of tests required in the first $T-1$ stages is $\cO{D\log{N}}$.

For the Coupon Collection stage, the number of right nodes is $\cO{\log{D}\log{\log{D}}}$. Since we perform GROTESQUE tests for each right node, the total number of tests required is $\cO{\log{D}\log{\log{D}}\log{N}}$, which is less than the $\cO{D\log{N}}$ tests required in the Random Binning part of the algorithm.

Thus our proposed scheme requires $\cO{D\log{N}}$ tests overall.

\noindent $b)$ {\it Decoding complexity:}
%\subsubsection{Decoding complexity}

Since in each stage our decoding algorithm has to step through all right nodes to decode the corresponding GROTESQUE tests, the total number of right nodes the algorithm needs to consider is $\sum_{t=1}^{T-1}d^{(t)}+\cO{\log{D}\log{\log{D}}}=\cO{D}$. By
 the analysis from Section~\ref{subsec:grotesquedecoding}, decoding each collection of GROTESQUE tests at a right node takes $\cO{\log{N}}$ time. Therefore, our algorithm runs in $\cO{D\log{N}}$ time.

\noindent $c)$ {\it Error probability:}
%\subsubsection{Error probability}

We show that the above algorithm succeeds with high probability in decoding all the defectives  in the claimed number of stages.

By the analysis of error events (1) and (2) in Corollary~\ref{cor:adaptive}, in the first part of the algorithm (random binning) we recover at least $D-\log{D}$ defectives with probability $1-\cO{\exp(-D)+(\log{D})/N}$.

To analyze the last (coupon collection) stage of tests, note that an error may occur only if one of the following events occur,
\begin{enumerate}[(1)]
\item[(3)] In the coupon collection process, fewer than $\log{D}$ distinct coupons are collected.
\item[(4)] At least one of the collected coupons was incorrectly decoded.
\end{enumerate}
To bound the probability of event~(3), we apply standard concentration bounds on the coupon collector's problem~\cite{MotR:95}. Towards this end, we first note that the probability that our algorithm identifies a right node $i$ as a leaf node may be written as
\begin{eqnarray}
\lefteqn{\Pr(\mbox{$i$ decoded as a $\Delta^{(T)}$-leaf node})}\nonumber\\
&\geq &\Pr(\mbox{$i$ decoded as a $\Delta^{(T)}$-leaf node, $i$ is a $\Delta^{(T)}$-leaf node})\nonumber\\
&=&\Pr(\mbox{$i$ decoded as a $\Delta^{(T)}$-leaf node $|$ $i$ is a $\Delta^{(T)}$-leaf node})\times\Pr(\mbox{$i$ is a $\Delta^{(T)}$-leaf node})\nonumber\\
&=&(1-\cO{1/N})\times\Pr(\mbox{$i$ is a $\Delta^{(T)}$-leaf node})\nonumber\\
&= &\Theta(1)\times \frac{cN}{\log{D}}\frac{\log D}{N-D+\log{D}}\left(1-\frac{\log{D}}{N-D+\log{D}}\right)^{\frac{cN}{\log{D}}}\nonumber\\
&= &\Theta(1).\nonumber
\end{eqnarray}
%\begin{eqnarray}
%\lefteqn{\Pr(\mbox{$i$ decoded as a $\mathcal{D}^{(T)}$-leaf})}\nonumber\\
%%&= &\Pr(\mbox{$i$ decoded as a leaf, $i$ is a leaf, degree$(i)\in[**,**]$})\nonumber \\
%%&&\quad+\Pr(\mbox{$i$ decoded as a leaf, $i$ is a leaf, degree$(i)\notin [**,**]$})\nonumber\\
%%&&\quad+\Pr(\mbox{$i$ decoded as a leaf, $i$ is not a leaf})\nonumber \\
%&\geq &\Pr(\mbox{$i$ decoded as a $\mathcal{D}^{(T)}$-leaf, $i$ is a $\mathcal{D}^{(T)}$-leaf})\nonumber\\
%&=&\Pr(\mbox{$i$ decoded as a $\mathcal{D}^{(T)}$-leaf|$i$ is a $\mathcal{D}^{(T)}$-leaf})\nonumber\\
%&&\quad\times\Pr(\mbox{$i$ is a $\mathcal{D}^{(T)}$-leaf})\nonumber\\
%&=&(1-\cO{1/N})\times\Pr(\mbox{$i$ is a $\mathcal{D}^{(T)}$-leaf})\nonumber\\
%&= &\Theta(1)\times \frac{cN}{\log{D}}\frac{\log D}{N-D+\log{D}}\left(1-\frac{\log{D}}{N-D+\log{D}}\right)^{\frac{cN}{\log{D}}}\nonumber\\
%&= &\Theta(1).\nonumber
%\end{eqnarray}

Since each decoded leaf node is independently chosen and the probability of picking a coupon is constant, by tail bounds on the coupon collection process~\cite{MotR:95}, the probability that at least one coupon has not been collected in $c'(\log{D})^2\log\log D$ steps is $\cO{(\log D)^{-\log{D}}}=\cO{poly(1/D)}$.

Thus, the overall error probability decays as $\cO{poly(1/D)}$.

\section{Non-adaptive Algorithms}
\label{sec:nonadaptive}
\begin{table*}[ht!]
\centering{}
%\scriptsize{{\small{} }%

\begin{tabular}{l|l}
\hline
\multicolumn{2}{l}{Non-adaptive Algorithm}\tabularnewline
\hline
${\cal G}_{g}$ & $g$-th sub bipartite graph, $g\in\{1,\ldots,c_{1}\log D\}$\tabularnewline
$\hat{{\bf y}}_{i}$ & The length-${\cal O}(\log N)$ binary vector denoting the actually
observed noisy outcomes \tabularnewline
 & of the $i$-th GROTESQUE Encoding\tabularnewline
$c_{1}$ & A constant related to the number of sub bipartite graph\tabularnewline
$c_{2}$ & A constant related to the number of right nodes for each bipartite graph\tabularnewline
$c_{3}$ & A constant related to the number of multiplicity tests for each right node\tabularnewline
$c_{4}$ & A constant related to the number of localization tests for each right node\tabularnewline
$c_{5}$ & $c_{5}=c_{1}c_{2}$\tabularnewline
%$\epsilon$ & \tabularnewline
$P_{0}$ & The probability that the neighbor of a defective item is a $\cal D$-leaf node\tabularnewline
${\cal L}({\cal D})$ & Leaf node list which contains all the ${\cal D}$-leaf nodes\tabularnewline
${\cal L}^{(t)}$ & Leaf node list for the $t$-th iteration.\tabularnewline
$M_{i,1}$ & The number of Multiplicity tests for right node $i$\tabularnewline
$M_{i,2}$ & The number of Localization tests for right node $i$\tabularnewline
\hline
\end{tabular}
%}
\caption{Table of notation used in our Non-adaptive algorithm}

\end{table*}

We consider non-adaptive group testing in this section. In non-adaptive
group testing, the set of items being tested in each test is required
to be independent of the outcome of every other test~\cite{DuH:00}.

The objective here is to determine an unknown set ${\cal D}$ of $D$
defective items from a collection ${\cal N}$ of size $N$. We assume
that each test outcome may be incorrect independently with probability
$q$.

\subsection{Overview}\label{sec:overview_nonadaptive}
\begin{figure}
\centering
\includegraphics[width=0.6\linewidth]{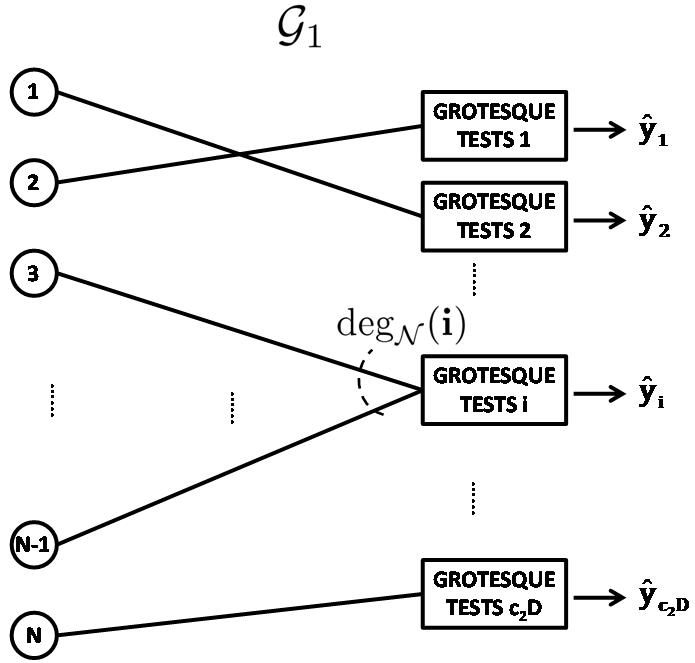}\caption{We generate $c_{1}\log D$ bipartite graphs with $N$ nodes on the left (representing $N$ items) and $c_{2}D$ nodes on the right. Take ${\cal G}_{1}$ as an example. For each right node $i$, we generate ${\cal O}(\log N)$ tests, $\bf{\hat{y_{i}}}$, by GROTESQUE tests. The input of the $i$-th GROTESQUE TESTS are the items connected to the right node $i$. The size of outputs is ${\cal O}(\log N)$. Based on the properties of GROTESQUE TESTS, we can estimate whether there exists exactly one defective item and if so, we can estimate its location with high probability.}\label{fig:nonadaptive1}
\end{figure}

\begin{figure}[t]
\centering\includegraphics[width=0.8\linewidth]{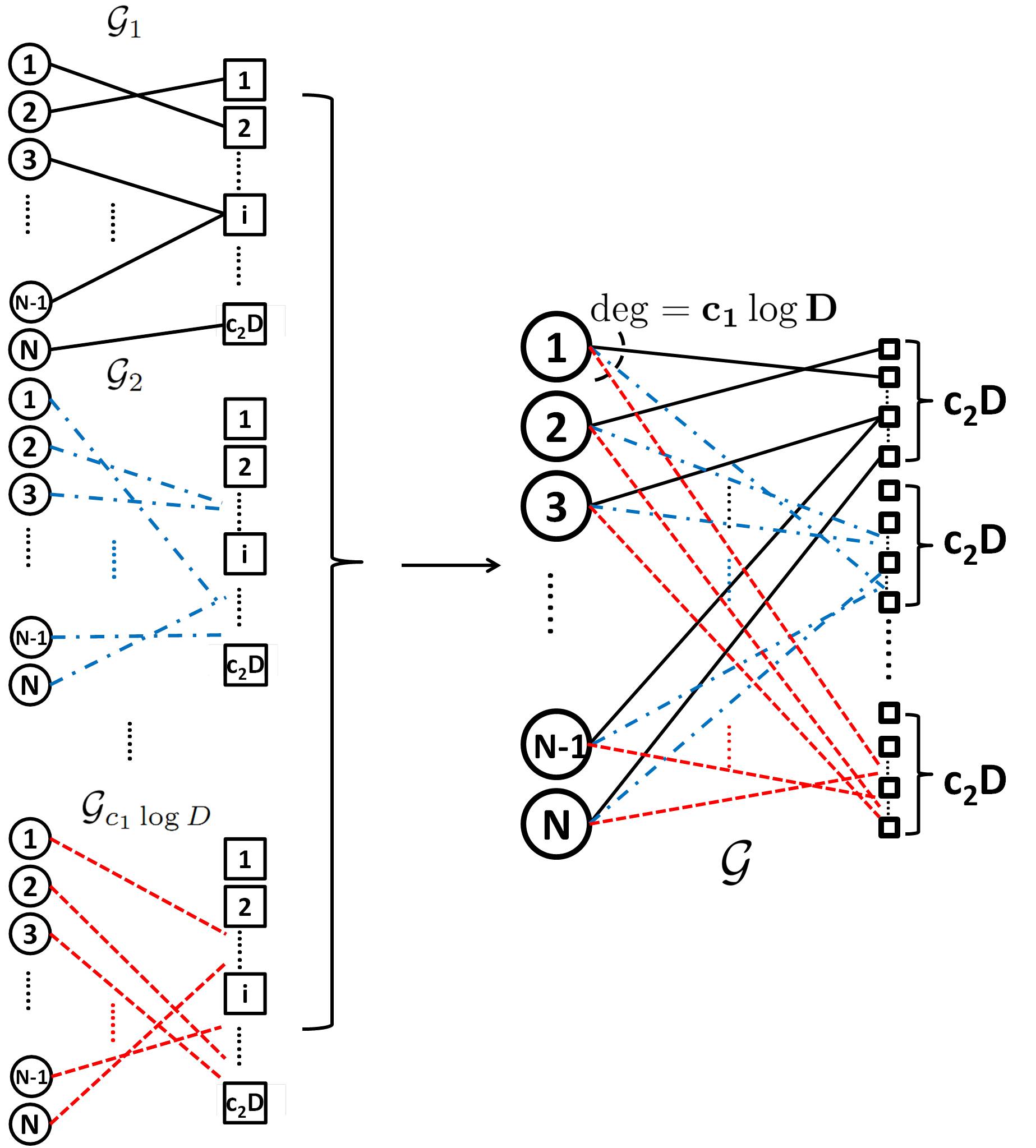}\caption{After generating $c_{1}\log D$ bipartite graphs as in Figure~\ref{fig:nonadaptive1}, we combine them to obtain the overall graph ${\cal G}$ for our non-adaptive group testing algorithm.
There are $N$ nodes on the left representing $N$ items. Collect the right nodes of all the $c_{1}\log D$ bipartite graphs on the right
side of ${\cal G}$ and maintain the connectivity between two sides. Different
colors show the connectivity between $N$ items and different right
node sets of equal size $c_{2}D$. Finally, we get a bipartite graph
${\cal G}$ with left regularity equal to $c_{1}\log D$. We can guarantee
that, with high probability, each defective item connects to a ${\cal D}$-leaf
node. }\label{fig:nonadaptive2}

\end{figure}

The structure of group-testing tests is based on left-regular bipartite
graphs $\{{\cal G}_g\}$. We put $N$ items on the left-hand side and $c_{5}D\log D$
nodes on the right-hand side of a bipartite graph. Each node on the
right-hand side of a bipartite graph is called a {\em group-testing} node. Group tests corresponding to the multiplicity and localization tests of GROTESQUE are performed as part of the encoding process, but the results are not necessarily used to decode. This is because our decoding algorithm can cherry-pick the group-testing nodes to decode only the ``useful'' ones.

%We define $\deg_{{\cal S}}(i)$ as the number of neighbors in left subset, ${\cal S\subseteq{\cal N}}$, for right node $i$. Among all
%the nodes on the right, there are three types: ${\cal D}$-leaf nodes, ${\cal D}$-zero nodes and ${\cal D}$-non-leaf nodes. A ${\cal D}$-leaf
%node is a right node $i$ with $\deg_{{\cal D}}(i)=1$, a ${\cal D}$-zero
%node is a right node $i$ with $\deg_{{\cal D}}(i)=0$, and a ${\cal D}$-non-leaf
%node is a right node $i$ with $\deg_{{\cal D}}(i)\geq2$.
The set of ${\cal D}$-leaf nodes (see the definition in Section \ref{sec:formal-adaptive}) are helpful for our decoding process since
GROTESQUE tests performed on a right node output the location of a defective item only if there
exists exactly one defective item among its neighbours. In our iterative
algorithm, we claim that there exists at least one ${\cal D}$-leaf
node in each iteration, so that we can decode one defective item.

\subsection{Formal Description}
In this section, we describe a probabilistic construction of the tests and an iterative Non-Adaptive Group Testing algorithm
that achieves the following guarantees.
\begin{theorem} The Non-Adaptive Group Testing algorithm described below has the following properties:

1) With probability {$1-{\cal O}(1/poly{(D)})$} over the choice of the random bipartite graphs, the algorithm produces a reconstruction of the collection
$\hat{{\cal D}}$ of ${\cal D}$ such that $\hat{\mathbf{{\cal D}}}={\cal D}$.

2) The number of tests $M$ equals ${\cal O}(D\log D\log N)$.

3) The decoding complexity is ${\cal O}(D(\log(N)+\mbox{\ensuremath{\log}}^{2}D))$.
\end{theorem}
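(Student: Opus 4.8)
The plan is to prove the three claims in order, with the bulk of the work going into a single structural ``leaf-abundance'' lemma about the random bipartite graphs $\{\mathcal{G}_g\}$; the test-count and running-time claims are then essentially bookkeeping. First, claim 2: the combined graph $\mathcal{G}$ has $c_5 D\log D = c_1 c_2 D\log D$ right nodes, and each right node $i$ carries $M_{i,1}=c_3\log D$ multiplicity tests and $M_{i,2}=c_4\log N$ localization tests (the latter because each right node sees $\Theta(N/D)$ items, and $\log(N/D)=\Theta(\log N)$ when $D=\cO{N^{1-\gap}}$). Hence the total is $\cO{D\log D(\log D+\log N)}=\cO{D\log D\log N}$, using $\log D=\cO{\log N}$.

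For claim 1 I would first establish the structural lemma: with probability $1-\cO{1/poly(D)}$, every defective item has at least $(P_0/2)\,c_1\log D$ neighbours (over the $c_1\log D$ graphs) that are $\mathcal{D}$-leaf nodes, where $P_0=(1-1/(c_2D))^{D-1}=\Theta(1)$ is the probability a fixed neighbour of a fixed defective is a leaf node. The key point is that a defective's $c_1\log D$ right-neighbours lie in distinct graphs and are therefore independent, so its number of leaf-neighbours is a sum of $c_1\log D$ independent $\mathrm{Bernoulli}(P_0)$ indicators; a Chernoff bound gives deviation probability $\exp(-\Theta(c_1\log D))$, which beats a union bound over the $D$ defectives once $c_1$ is a sufficiently large constant. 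Conditioned on this event, I would argue the iterative decoder cannot get stuck: (i) the multiplicity tests run in Step~1 on all $\cO{D\log D}$ right nodes are simultaneously correct except with probability $\cO{D\log D}\exp(-\Theta(c_3\log D))=\cO{1/poly(D)}$ by Lemma~\ref{lem:errorgrotesque} (choose $c_3$ large), so the initial leaf-node list equals exactly the set of $\mathcal{D}$-leaf nodes; (ii) since distinct defectives have disjoint leaf nodes, deleting all right-neighbours of a just-decoded defective removes none of the ($\geq 1$) leaf nodes of the other defectives, so the list stays nonempty until all $D$ defectives are localized; (iii) each of the $D$ localization calls errs with probability $2^{-\Theta(c_4\log N)}=\cO{1/poly(N)}$ by Theorem~\ref{thm:expandercode}, so a union bound over the $D$ of them still contributes only $\cO{1/poly(D)}$. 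Summing these three failure events yields the claimed reconstruction-error bound.

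For claim 3 I would account for the three steps of the decoder separately. Step~1 performs one multiplicity test, of cost $\cO{c_3\log D}$, at each of the $\cO{D\log D}$ right nodes, hence $\cO{D\log^2 D}$ in total, plus $\cO{D\log D}$ to populate the leaf-node list. Step~2 is executed exactly $D$ times, each invocation a single expander-code decode of block-length $c_4\log N$, which is $\cO{\log N}$ by Theorem~\ref{thm:expandercode}, for $\cO{D\log N}$ overall. Step~3, aggregated over the whole run, touches each edge of $\mathcal{G}$ at most once: decoding a defective $j$ deletes its $c_1\log D$ right-neighbours from the graphs and from the leaf list, each deletion $\cO{1}$ with the obvious adjacency lists, and crucially no multiplicity test is ever recomputed (right nodes are not adjacent to one another). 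Hence Step~3 costs $\cO{D\log D}$ in aggregate, and the decoder runs in $\cO{D(\log^2 D+\log N)}$ time.

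The main obstacle is the structural lemma together with the correctness of the simple, non-re-inserting peeling process: one has to check that $P_0$ is bounded below by a positive constant uniformly in $D$, that the per-defective leaf-neighbour counts are genuinely sums of independent indicators across the $\mathcal{G}_g$'s (so Chernoff applies with the right exponent to survive the union bound over $D$ defectives), and that discarding all neighbours of a decoded defective can never strand a remaining defective --- the last point being exactly where the ``a leaf node is adjacent to a unique defective'' fact is used. Everything else --- the two counting arguments and the union bounds over the $\cO{D\log D}$ GROTESQUE invocations --- is routine once the GROTESQUE guarantees of Lemma~\ref{lem:errorgrotesque} are invoked as a black box.
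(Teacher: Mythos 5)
Your proposal is correct and follows essentially the same route as the paper: the same leaf-abundance lemma (a Chernoff bound on the $c_1\log D$ independent leaf indicators per defective, plus a union bound over the $D$ defectives), the same union bounds over the $\cO{D\log D}$ multiplicity tests and the $D$ localization calls, and the same per-step complexity accounting. Your only addition is to spell out explicitly why the peeling never strands a remaining defective (leaf nodes of distinct defectives are disjoint, so deleting a decoded defective's neighbours preserves the others' leaf nodes), a point the paper asserts implicitly; your $\cO{D\log D}$ aggregate cost for the deletion step is even slightly tighter than the paper's $\cO{D\log^2 D}$, and both fit the claimed bound.
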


\noindent $a)$ {\it Description of graph properties:}
%\subsubsection{Description of graph properties}

We first construct a bipartite graph ${\cal G}$ (Figure~\ref{fig:nonadaptive2}) with some
desirable properties outlined below. We then show that such the random graphs we choose satisfy such properties with high probability. In Section V-Bb), we then use these graph properties in
the Non-adaptive Group Testing algorithm.

\noindent $\underline{{\bf Properties}\mbox{ }{\bf of}\mbox{ }{\bf {\cal G}:}}$
\begin{enumerate}
\item \underline{\em Construction of a left-regular bipartite graph:}
As in Figure~\ref{fig:nonadaptive2}, we generate ${\cal G}$ by combining $c_{1}\log D$ left-regular graphs
${\cal G}_{g}$, for $g=1,\ldots,c_{1}\log D$. For each ${\cal G}_{g}$,
there are $N$ items on the left and $c_{2}D$ group-testing nodes
on the right. The graph ${\cal G}_{g}$ has left-regularity equal to $1$ and
each edge connects to a right node uniformly at random. After constructing
all the $c_{1}\log D$ graphs ${\cal G}_{g}$, we combine them to form ${\cal G}$
in the following way. Keep $N$ items on the left, and collect the
right nodes. Therefore, ${\cal G}$ has the properties that it has
$N$ nodes on the left with left-regularity equal to $c_{2}\log D$, and $c_{5}D\log D=c_{1}c_{2}D\log D$ nodes on the right.
\item \underline{\em ``Many" ${\cal D}$-leaf nodes:} For
any set ${\cal D}$ of size $D$ on the left of ${\cal G}$, none of the nodes in ${\cal D}$ has fewer than a
constant fraction of ${\cal D}$-leaf nodes. The proof of this statement is the subject of Lemma
\ref{lem:many leaf}.
\end{enumerate}
\begin{lemma}
\label{lem:many leaf}With probability $1-{\cal O}(D^{-1})$, the
fraction of $c_{1}\log D$ neighbors of each defective item that are ${\cal D}$-leaf
nodes is $\mbox{\ensuremath{\Omega}(}\exp(-1/c_{2}))$ .
\end{lemma}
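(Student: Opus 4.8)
The plan is to fix one defective item $j \in \mathcal{D}$ and consider its $c_1 \log D$ neighbours, one in each of the graphs $\mathcal{G}_g$. For a fixed $g$, the right-node $i_g$ to which $j$ connects is a $\mathcal{D}$-leaf node precisely when none of the other $D-1$ defective items happens to land on $i_g$ in $\mathcal{G}_g$. Since each left node of $\mathcal{G}_g$ chooses its right neighbour uniformly and independently among the $c_2 D$ right nodes, the probability that a given other defective avoids $i_g$ is $1 - 1/(c_2 D)$, so
\[
P_0 \;\triangleq\; \Pr(i_g \text{ is a } \mathcal{D}\text{-leaf node}) \;=\; \left(1 - \frac{1}{c_2 D}\right)^{D-1} \;\geq\; \exp\!\left(-\frac{1}{c_2}\right)(1 - o(1)),
\]
using the standard inequality $(1-1/x)^{x-1} > e^{-1}$ (exactly as in Lemma~\ref{lem:mcdiarmid}). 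So each of the $c_1 \log D$ neighbours of $j$ is a $\mathcal{D}$-leaf node with probability $\Omega(\exp(-1/c_2))$, and — crucially — because the graphs $\mathcal{G}_g$ are generated independently, these $c_1 \log D$ indicator events are mutually independent (the neighbour of $j$ in $\mathcal{G}_g$ and which other defectives collide with it are determined by disjoint sets of edge choices across different $g$).

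Next I would apply a Chernoff/Hoeffding bound to the sum $Z_j$ of these $c_1 \log D$ independent indicators, whose mean is $P_0 \, c_1 \log D = \Omega(c_1 \log D \cdot \exp(-1/c_2))$. A multiplicative Chernoff bound gives $\Pr(Z_j < \frac{1}{2} P_0 c_1 \log D) \leq \exp(-\Omega(P_0 c_1 \log D)) = D^{-\Omega(c_1)}$. Choosing the constant $c_1$ large enough (depending on $c_2$) makes this at most $D^{-2}$, say. Then a union bound over the $D$ defective items yields that, with probability $1 - \mathcal{O}(D^{-1})$, \emph{every} defective item has at least a $\tfrac{1}{2}P_0 = \Omega(\exp(-1/c_2))$ fraction of its $c_1 \log D$ neighbours being $\mathcal{D}$-leaf nodes, which is exactly the claim.

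The two points requiring care — the "main obstacles," though neither is deep — are first, justifying the \emph{independence} of the $c_1\log D$ leaf-indicator events for a fixed $j$ (this follows because $\mathcal{G}_1,\dots,\mathcal{G}_{c_1\log D}$ are drawn independently, so I would state this explicitly rather than gloss over it), and second, being careful that the statement is "for any set $\mathcal{D}$" — but since $\mathcal{D}$ is a fixed (if unknown) set chosen before the random graph, no additional union bound over choices of $\mathcal{D}$ is needed; the $D$-fold union bound is only over the elements \emph{within} $\mathcal{D}$. I would also note that the asymptotic $\exp(-1/c_2)$ in the statement absorbs the constant factor $\tfrac12$ and the $(1-o(1))$ term into the implied constant of the $\Omega(\cdot)$, and that the bound $D^{-\Omega(c_1)} \le D^{-1}$ (after the union bound) forces the quantitative relationship between $c_1$ and $c_2$ that is used when the overall number of tests $\mathcal{O}(c_1 c_2 D \log D \log N)$ is tallied.
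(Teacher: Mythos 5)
Your proposal is correct and follows essentially the same route as the paper: define per-graph leaf indicators for a fixed defective, compute $P_0=(1-1/(c_2D))^{D-1}\approx e^{-1/c_2}$, invoke independence across the $c_1\log D$ independently drawn graphs, apply a Chernoff bound, and union-bound over the $D$ defectives. If anything, you are slightly more careful than the paper's write-up in making the per-item failure probability $\mathcal{O}(D^{-2})$ so that the union bound over $D$ items genuinely yields the claimed $1-\mathcal{O}(D^{-1})$.
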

\begin{proof}
Define $\mathbf{W_{i,j}}$ as the random variable representing whether the
neighbor of a defective item $x_{j}$ is a ${\cal D}$-leaf node for
the $i$-th graph ${\cal G}_{i}$.
\[
\mathbf{W_{i,j}}=\left\{ \begin{array}{cc}
1, & \mbox{if the neighbor left node \ensuremath{j} is a \ensuremath{{\cal D}}-leaf node}\\
0, & \mbox{otherwise}
\end{array}\right.
\]
%\[
%W_{i,j}=\begin{cases}
%1 & \mbox{ if left node \ensuremath{j} is a}\\
% & \quad {\cal D}\mbox{-leaf node}\\
%0 & \mbox{ otherwise}
%\end{cases}
%\]
Then, the total number of ${\cal D}$-leaf nodes of $x_{j}$ is $\sum_{i=1}^{\log D}\mathbf{W_{i,j}}$.

\begin{eqnarray*}
P_{0} & \triangleq & \Pr(\mbox{the neighbor of a defective }\\
 &  & \mbox{item \ensuremath{x_{j}} is a \ensuremath{{\cal D}}-leaf node})\\
 & = & \left(1-\frac{1}{c_{2}D}\right)^{D-1}\\
 & \approx & \exp\left(-\frac{1}{c_{2}}\right),
\end{eqnarray*}
\noindent and $\mathbf{W_{1,j}},\ldots,\mathbf{W_{c_{1}\log D,j}}$ are i.i.d. Bernoulli random
variables with parameter $P_{0}$.

Therefore, by the Chernoff bound, we have
\begin{eqnarray*}
 &  & \Pr\left(\sum_{i=1}^{c_{1}\log D}\mathbf{W_{i,j}}-P_{0}c_{1}\log D\leq-\epsilon P_{0}c_{1}\log D\right)\\
 & \leq & \exp\left(-\frac{\epsilon^{2}}{2}P_{0}c_{1}\log D\right)
\end{eqnarray*}

Hence, the probability for each defective item that it has at least a
constant fraction of ${\cal D}$-leaf nodes is $1-{\cal O}(1/D)$
(by choosing the $c_{1}$ and $c_{2}$ appropriately).

Then, by the union bound, all defective items have at least constant fraction
of ${\cal D}$-leaf nodes.
\hfill \end{proof}

In the following two sections, we describe how to use the properties
of ${\cal G}$ to perform the encoding and decoding.

\noindent $b)$ {\it Test design:}
%\subsubsection{Test design}\label{subsec:nagt}

For each node $i$ on the right of ${\cal G}$, we design GROTESQUE tests
with $\deg_{{\cal N}}(i)$ inputs which are the items connected to right node $i$. We choose the
number of multiplicity tests, $M_{i,1}$, equal to $c_{3}\log(c_{5}D\log D)$
and the number of localization tests, $M_{i,2}$, equal to $c_{4}\log(\deg_{{\cal N}}(i))$
for some positive constant $c_{3}$ and $c_{4}$. We already know
that $c_{3}\log(c_{5}D\log D)={\cal O}(\log D)$. Since the right degree of any node is at most $N$, the total number of GROTESQUE-tests
required for the right node $i$ equal to $M_{i,1}+M_{i,2}={\cal O}(\log N)$.
Therefore, the overall number of tests $M$ equals ${\cal O}(D\log D\log N)$.
%%  \begin{lem}
%%
%% \label{lem:right degree}With high probability, for all right node
%%
%% $i$, $\deg_{{\cal N}}(i)$ concentrates on $N/c_{2}D$.\end{lem}
%%
%% \begin{proof}
%%
%% The probability that one edge is connected to $i$ equals to $\frac{1}{c_{2}D}$
%%
%% and there are $N$ edges.
%%
%%
%%
%% Therefore, the expected degree of $y_{i}$ is ${\bf E}[\deg_{{\cal N}}(i)]=\frac{N}{c_{2}D}$.
%%
%%
%%
%% By union bound,
%%
%%
%%
%% \begin{eqnarray*}
%%
%%  &  & {\bf P}(|\deg_{{\cal N}}(i)-\frac{N}{c_{2}D}|\leq\epsilon\frac{N}{c_{2}D})\\
%%
%%  & \leq & 2\exp(-\frac{\epsilon^{2}}{2}\frac{N}{c_{2}D})
%%
%% \end{eqnarray*}
%%
%%
%%
%%
%%
%% We can choose $\epsilon=1/4$.
%%
%%
%%
%% By union bound the concentration type results are true for all right
%%
%% nodes.
%%
%% \end{proof}
%%

\noindent $c)$ {\it Decoding algorithm:}
%\subsubsection{Decoding algorithm}

Before the iterative decoding process, we make a ${\it leaf}\mbox{ }node\mbox{ }list$,
${\cal L}({\cal D})$, which contains all the ${\cal D}$-leaf nodes
based on the multiplicity testing part of GROTESQUE tests. Based on the
properties of graph ${\cal G}$, we know that each defective item has at least a constant fraction of ${\cal D}$-leaf nodes.
Denote ${\cal L}^{(t)}$ as the leaf node list in $t$-th iteration,
$t=1,2,\ldots,D,D+1$. ${\cal L}^{(1)}={\cal L}({\cal D})$, ${\cal L}^{(D+1)}=\varnothing$
and ${\cal L}^{(t)}\neq\varnothing$, for $t=1,2,\ldots,D$. In the
$t$-th iteration, we pick a right node $i\in{\cal L}^{(t)}$ and
decode a defective item using the localization part of GROTESQUE tests
of $i$ to locate the corresponding defective item. After that, we
cancel the defective item, its corresponding edges and its neighbors.
We update the ${\cal L}^{(t)}$ to ${\cal L}^{(t+1)}$. In the $(t+1)$-th
iteration, we pick another ${\cal D}$-leaf node in ${\cal L}^{(t+1)}$.
The formal description of the non-adaptive group testing algorithm is as follows:
\begin{enumerate}
\item $\underline{Initialization:}$ Go through all the right nodes and
use {\em only} the multiplicity testing part of GROTESQUE-test to initialize ${\cal L}^{(1)}={\cal L}({\cal D})$.
\item $\underline{Operations\mbox{ }in\mbox{ }t\mbox{-th }iteration:}$

\begin{enumerate}[i)]
\item \label{enu:Pick-a-right-node}Pick any right node $j$ in ${\cal L}^{(t)}$;
\item Use localization part of GROTESQUE-test to decode the corresponding defective;
\item Remove the decoded defective item, all the edges connected to it, and
all its neighbours;
\item Update ${\cal L}^{(t)}$ to ${\cal L}^{(t+1)}$ by removing the leaf nodes removed in step iii) and return to step i).
\end{enumerate}
\item $\underline{Termination:}$ The algorithm stops when the leaf node
list becomes empty, and outputs the defective set $\hat{{\cal D}}$.
\end{enumerate}

\subsection{Performance Analysis}

\noindent $a)$ {\it Number of iterations:}
%\subsubsection{Number of iterations}

Since we guarantee that in each iteration we can decode one defective item, the number of iterations is $D$.

\noindent $b)$ {\it Decoding complexity:}

%\subsubsection{Decoding Complexity}

For each right node, checking the multiplicity testing part of GROTESQUE-test
costs $c_{3}\log(c_{5}D\log D)$ steps. Therefore, the total computational cost is $[c_{3}\log(c_{5}D\log D)](c_{5}D\log D)={\cal O}(D\log^{2}D)$ in the initialization step.

In each iteration, the cost of localization is ${\cal O}(\log N)$ steps. Removing the decoded defective item takes $1$ step. Removing all the edges
connected to it and its neighbours takes time $\log(c_{5}D\log D)](c_{1}\log D)={\cal O}(\log^{2}D)$,
where $\log(c_{5}D\log D)$ is the cost of addressing one neighbour of
the decoded defective item. Therefore, the time complexity for the
iterative decoding process is ${\cal O}(D(\log N+\log^{2}D))$.

Hence, we can conclude that the overall time complexity is ${\cal O}(D(\log N+\log^{2}D))$
based on the analysis above.

\noindent $c)$ {\it Error probability:}

%\subsubsection{Correctness}

Finally, we show that $\hat{{\cal D}}={\cal D}$ with a high probability
by choosing the parameters $c_{3}$ and $c_{4}$ carefully.

We set $M_{i,1}=c_{3}\log(c_{5}D\log D)$ to make sure that the probability
of incorrect multiplicity decoding for each node is ${\cal O}(1/c_{5}D\log D)$
by choosing $c_{3}>1$. Then by the union bound, the probability of incorrect
multiplicity decoding is $\cO{1/poly(D)}$.

We set $M_{i,2}=c_{4}\log(\deg_{{\cal N}}(i))={\cal O}(\log N)$
to make sure that each neighbour of $i$ has distinct codewords by choosing
$c_{4}>1$. The probability of incorrect localization in each iteration is ${\cal O}(\exp(-M_{i,2}))$ which is upper bounded by
${\cal O}(1/N)$. Finally, by applying the union bound over $D$ iteration,
the probability of incorrect decoding is bounded from above by $\cO{1/poly(D)}$.

\section{Two-stage group testing}\label{sec:2stage}
%\sh{If $D$ is just an upper bound of the number of defectives, both non-adaptive and 2-stage algorithms work. Need to check whether it works for multi-stage algorithm.}
In this section, we present a $2$-stage adaptive group testing problem with both  decoding complexity and number of tests that is nearly order-optimal (up to a multiplicative factor that is at most $\cO{\log N}$). Again, the objective is to determine an unknown
set ${\cal D}$ of $D$ defective items from a collection ${\cal N}$
of size $N$. In both stages, we perform tests in a non-adaptive manner,
though the tests of the second stage depend on the outcomes of tests
in the first stage. As earlier, we assume that each test outcome
may be incorrect independently with probability $q$.

\subsection{Overview}
\begin{figure*}[ht]
\centering\label{fig:2-stage}
\includegraphics[width=0.7\linewidth]{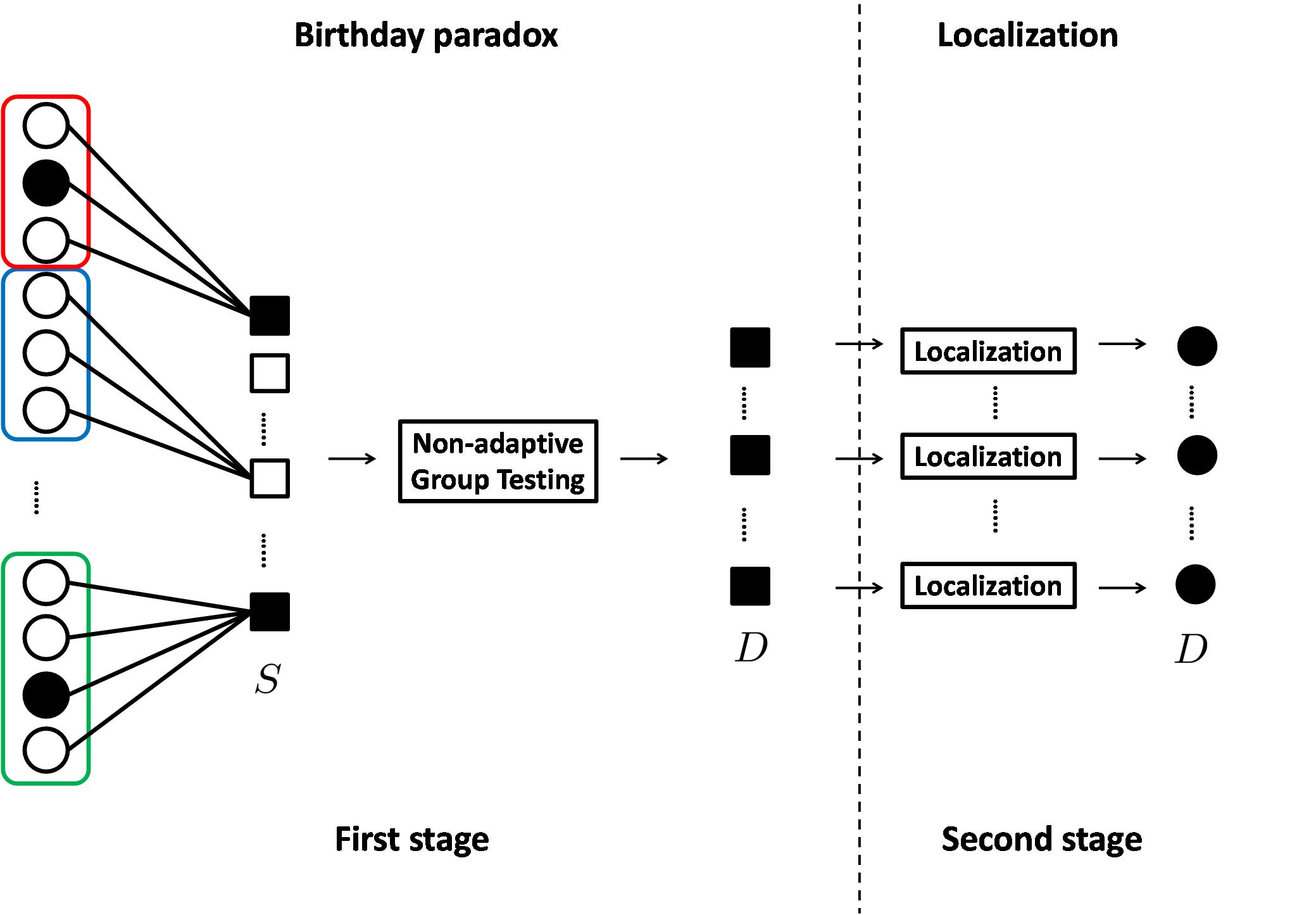}\caption{In the first stage, we generate a bipartite graph with $N$ nodes on the left representing $N$ items and $S$ nodes on the right. The black circular nodes represent defective items and the white ones represent non-defective items. Each bipartite graph is left-regular with left-degree equal to $1$. The left nodes of such graph are partitioned randomly -- different coloured collections have different ``birthdays''. Nodes in the same partition have the same ``birthday''. With high probability, each right node is either ${\cal D}$-leaf node (black right node) or ${\cal D}$-zero node (white right node) according to our choice of $S$. Applying our non-adaptive algorithm, we identify $D$ leaf nodes. In the second stage, applying localization testing on each ${\cal D}$-leaf node, we identify the corresponding defective items.}
\end{figure*}

\begin{table*}[ht!]
\centering{}
%\scriptsize{{\small{} }%

\begin{tabular}{l|l}
\hline
\multicolumn{2}{l}{Two-Stage Adaptive Algorithm}\tabularnewline
\hline
${\cal G}$ & A bipartite graph used in the first stage\tabularnewline
$S$ & The number of nodes on the right of ${\cal G}$\tabularnewline
\hline
\end{tabular}
\caption{Table of notation used in our 2-stage adaptive algorithm}

%}
\end{table*}

Our algorithm has $2$ adaptive stages.

In the first stage, we use the birthday paradox problem as a primitive
to construct a bipartite graph ${\cal G}$. ${\cal G}$ has the following
properties - the graph is bipartite, has $N$ nodes on the left representing
$N$ items ($N$ ``people''), is left-regular with regularity equals $1$, and $S$
($=poly(D)$ with degree larger than $3$) nodes on the right ($S$ choices of ``birthdays''). We show that with
high probability ($1-\cO{1/poly(D)}$), each right node is either a ${\cal D}$-leaf node
or a ${\cal D}$-zero node ({\it i.e.}, no pair of them have the same birthday). We use the non-adaptive algorithm discussed
in Section \ref{sec:nonadaptive} on the $S$ right nodes to identify the $D$ ${\cal D}$-leaf
nodes. In the first stage the total number of tests is ${\cal O}(D\log D\log S)={\cal O}(D\log^{2}D)$
and the decoding complexity is ${\cal O}(D(\log S+\log^{2}D))={\cal O}(D\log^{2}D)$.

In the second stage, we use the localization procedure of  GROTESQUE with $n = \cO{(N/poly(D))}$, on
all ${\cal D}$-leaf nodes identified in the first stage.
Note that with high probability there are exactly $D$ right nodes that are ${\cal D}$-leaf nodes, out of $poly(D)$ right nodes in total -- the fact that we test only $D$ of them is what gives us potentially significant savings in the number of tests and decoding complexity.
In the second stage the total number of tests is ${\cal O}(D\log N)$
and the decoding complexity is ${\cal O}(D\log N)$.

Over both stages, our $2$-stage adaptive algorithm hence requires ${\cal O}(D(\log N+\log^{2}D))$
tests and runs in ${\cal O}(D(\log N+\log^{2}D))$ steps.

\subsection{Formal Description}

In this section, we describe a $2$-stage adaptive group testing algorithm
that achieves the following guarantees.
\begin{theorem}
The Two-stage Adaptive Group Testing algorithm described below has
the following properties:

1) With probability $1-{\cal O}(1/poly(D))$ over the choice of the random bipartite
graph, the algorithm produces a reconstruction of the collection $\hat{{\cal D}}$
of ${\cal D}$ such that $\hat{{\cal D}}={\cal D}$.

2) The number of tests $M$ equals ${\cal O}(D(\log N+\log^{2}D))$.

3) The number of stages is $2$.

4) The decoding complexity is ${\cal O}(D(\log N+\log^{2}D))$.
\end{theorem}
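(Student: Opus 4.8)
The plan is to verify each of the four claimed properties in turn, leaning heavily on the birthday-paradox hashing of the first stage, the guarantees of the non-adaptive algorithm from Section~\ref{sec:nonadaptive}, and the GROTESQUE localization guarantees from Lemma~\ref{lem:errorgrotesque}. For property~3 there is nothing to prove: the construction is explicitly two non-adaptive batches, with the second depending on the outcome of the first. The real content is in properties~1, 2, and 4, and these three are tightly coupled through the choice $S = \mathrm{poly}(D)$ with degree at least $3$.

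First I would establish the structural event for stage one. Partition $\mathcal{N}$ into $S$ groups by assigning each item independently and uniformly to one of $S$ right nodes (left-regularity $1$). By a union bound over the $\binom{D}{2}$ pairs of defectives, the probability that some group contains two or more defectives is at most $\binom{D}{2}/S = \cO{D^2/S} = \cO{1/\mathrm{poly}(D)}$ since $S$ has degree at least $3$ in $D$. Condition on the complementary ``no collision'' event: now every right node of $\mathcal{G}$ is a $\mathcal{D}$-leaf node (exactly one defective) or a $\mathcal{D}$-zero node, and there are exactly $D$ leaf nodes. The task of stage one is then precisely to {\em identify which $D$ of the $S$ right nodes are the leaf nodes}, and this is exactly the non-adaptive group-testing problem solved in Section~\ref{sec:nonadaptive}, applied with the ``items'' being the $S$ right nodes and the ``defectives'' being the $D$ leaf nodes. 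Invoking that theorem gives: correct identification with probability $1 - \cO{1/\mathrm{poly}(D)}$, number of tests $\cO{D \log D \log S} = \cO{D\log^2 D}$ (since $\log S = \Theta(\log D)$), and decoding complexity $\cO{D(\log S + \log^2 D)} = \cO{D\log^2 D}$. Here I would note one technical point to check carefully: the non-adaptive algorithm's multiplicity sub-tests must be run on subsets of the {\em original} $N$ items (the union of the groups indexed by a collection of right nodes), so each such test is a genuine pooled test on items, and the count of tests is as claimed because each GROTESQUE test on such a pool still uses $\cO{\log N}$ localization and $\cO{\log D}$ multiplicity tests — but in stage one we only ever use the multiplicity part, costing $\cO{\log D}$ per pool, which is what yields the $\cO{D\log^2 D}$ stage-one bound rather than $\cO{D\log D \log N}$.

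For stage two: we have $D$ identified leaf nodes, each corresponding to a group of $\cO{N/\mathrm{poly}(D)}$ items containing exactly one (as-yet-unlocalized) defective. We run GROTESQUE's localization procedure on each such group with $n = \cO{N/\mathrm{poly}(D)}$, so $m_2 = \cO{\log n} = \cO{\log N}$ localization tests per leaf node, for $\cO{D\log N}$ tests and $\cO{D\log N}$ decoding steps total by the complexity bound in Section~\ref{subsec:grotesquedecoding}. By Lemma~\ref{lem:errorgrotesque}, each localization errs with probability $\cO{1/\mathrm{poly}(N)} = \cO{1/\mathrm{poly}(D)}$ (using $D = \Theta(N^{1-\delta})$), and a union bound over the $D$ leaf nodes keeps the total error $\cO{1/\mathrm{poly}(D)}$. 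Summing tests and complexity over the two stages gives $\cO{D(\log N + \log^2 D)}$ for both, establishing properties~2 and~4; and a final union bound over the no-collision event, the stage-one identification error, and the $D$ stage-two localization errors gives property~1.

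The main obstacle, and the step I would spend the most care on, is the reduction in stage one: making precise that running the non-adaptive algorithm of Section~\ref{sec:nonadaptive} ``on the $S$ right nodes'' is legitimate as a {\em non-adaptive test design on the $N$ original items}. One must check that the bipartite graphs $\mathcal{G}_g$ of that algorithm, when composed with the birthday-hash map from items to the $S$ groups, still yield pooled tests on items (they do — a pool is just the preimage of a set of right nodes), that the ``leaf node'' notion there (a $\mathcal{D}$-leaf among the $S$ nodes) coincides with ``contains exactly one defective item'' so GROTESQUE's multiplicity test behaves as analyzed, and that the parameter substitution $N \mapsto S = \mathrm{poly}(D)$ in that theorem is valid (in particular $D = \cO{S^{1-\delta'}}$ for some $\delta' > 0$, which holds since $S$ has degree $\geq 3$). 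Everything else is a routine union bound and bookkeeping of the $\cO{\cdot}$ terms.
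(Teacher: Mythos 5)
Your overall route is the same as the paper's: a union bound over $\binom{D}{2}$ pairs gives the no-collision (birthday) event with probability $1-\mathcal{O}(1/poly(D))$ for $S=poly(D)$ of degree at least $3$ (this is literally the paper's second proof of Lemma~\ref{birthday}); stage one is then the non-adaptive problem of Section~\ref{sec:nonadaptive} with the $S$ hashed right nodes playing the role of items and the $D$ leaf nodes playing the role of defectives, giving $\mathcal{O}(D\log D\log S)=\mathcal{O}(D\log^2 D)$ tests and $\mathcal{O}(D(\log S+\log^2 D))$ decoding; stage two runs GROTESQUE localization on each identified leaf group with $n=\mathcal{O}(N/poly(D))$, costing $\mathcal{O}(D\log N)$ in tests and time; and the error probability is a union bound over the collision event, the stage-one error, and the $D$ localizations. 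This matches the paper's proof essentially step for step, including the parameter substitution $N\mapsto S$.

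There is, however, one step that as written would fail: your ``technical point'' that in stage one ``we only ever use the multiplicity part'' of GROTESQUE, and that this is what rescues the stage-one count from $\mathcal{O}(D\log D\log N)$ down to $\mathcal{O}(D\log^2 D)$. Multiplicity tests alone reveal, for each inner right node of the graphs $\mathcal{G}_g$, only whether its pool contains $0$, $1$, or $\geq 2$ leaf nodes; they cannot tell you \emph{which} of the $S$ hashed nodes is the leaf, so the iterative decoder of Section~\ref{sec:nonadaptive} (which localizes one defective per iteration) could not identify the $D$ leaf nodes at all. The correct resolution --- and the one the paper makes explicit --- is that stage-one tests are defined at the granularity of the hashed right nodes: an item is included in a test if and only if its hashed node is, so the localization matrix columns are indexed by the (at most $S$) hashed neighbours, and the localization tests in stage one cost $\mathcal{O}(\log S)=\mathcal{O}(\log D)$ per pool, not $\mathcal{O}(\log N)$. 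In other words, localization \emph{is} used in stage one, but over the $S$-node universe; this is already what your invocation of the Section~\ref{sec:nonadaptive} theorem with $N\mapsto S$ assumes, so the fix is to replace the ``multiplicity-only'' remark with this granularity observation, after which your proof coincides with the paper's.
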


\noindent $a)$ {\it Test design and decoding algorithm:}
%\subsubsection{Test design and decoding algorithm}$\mbox{ }$\newline

\begin{itemize}
\item \noindent{\bf\underline{\em Birthday paradox hashing}:} In the first stage, we consider
a random left regular bipartite graph ${\cal G}$ with $S$ right nodes
and $N$ left nodes. We set each left node of ${\cal G}$ to be of degree
one. We choose this graph uniformly at random. The property we required
is that, with high probability, each right node is either a ${\cal D}$-leaf
node or a ${\cal D}$-zero node (see the definitions in Section~\ref{sec:formal-adaptive}).
By the ``standard birthday paradox argument'' \cite{feller1}, the failure probability
scales as ${\cal O}(1/poly(D))$ if we choose $S={\cal O}(poly(D))$
with degree larger than $3$ (see Lemma~\ref{birthday} below). To identify all the
${\cal D}$-leaf nodes is equivalent to the group testing problem of finding $D$
defectives from $S$ items. We apply our non-adaptive algorithm to
all right nodes. Here if a right node $i$ is (respectively is not)
included in a test, then all the neighbors of $i$ are (respectively
are not) included in that test. The outcomes of the first stage are
all ${\cal D}$-leaf nodes.

\item \noindent{\bf\underline{\em Localization}:} In the second stage, we use the GROTESQUE localization procedure (with $n = \cO{N/poly(D)}$) on
each ${\cal D}$-leaf node identified in the first stage, to decode the
corresponding defective item.
\end{itemize}

\subsection{Performance Analysis}

The analyze the performance of the first part of the algorithm, we
require the following lemma.
\begin{lemma}\label{birthday}
The probability that no defective items have the same neighbor (right
node) scales as $1-{\cal O}(1/poly(D))$, if we choose $S={\cal O}(poly(D))$
with degree larger than $3$.
\end{lemma}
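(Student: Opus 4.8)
The plan is to set this up as a standard birthday-paradox union-bound calculation. We have $D$ defective items, each of which (by left-regularity one) is independently and uniformly assigned to one of $S$ right nodes. A ``bad event'' occurs precisely when two distinct defectives land on the same right node, i.e. when there is a collision among the $D$ uniformly random draws from $[S]$. So the quantity to bound is $\Pr(\exists\, j\neq j' \in {\cal D} : \text{neighbor}(j)=\text{neighbor}(j'))$, and showing this is $\cO{1/poly(D)}$ with a polynomial of degree larger than $3$ is exactly the claimed conclusion, since ``no two defectives share a neighbor'' is equivalent to every right node being a ${\cal D}$-leaf node or a ${\cal D}$-zero node.

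The key steps, in order: first, fix an arbitrary pair of distinct defectives $j,j'$; since their assignments are independent and uniform over $S$ right nodes, $\Pr(\text{neighbor}(j)=\text{neighbor}(j')) = 1/S$. Second, apply the union bound over all $\binom{D}{2} < D^2/2$ such pairs, giving failure probability at most $\binom{D}{2}/S < D^2/(2S)$. Third, substitute $S = \cO{poly(D)}$ with the degree of the polynomial chosen to be larger than $3$, say $S = \Theta(D^{3+\gamma})$ for some $\gamma>0$ (or more simply any $S \geq D^{3}$): then the failure probability is at most $D^2/(2S) = \cO{D^{-(1+\gamma)}} = \cO{1/poly(D)}$, and in fact one can arrange it to decay faster than any fixed inverse polynomial in $D$ by taking the degree large enough, which is what is needed to make the overall error probability of the two-stage scheme $\cO{1/poly(D)}$. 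It is worth remarking why degree $3$ rather than the classical degree $2$ is used: degree $2$ (i.e. $S = \Theta(D^2)$, the textbook birthday threshold) only gives failure probability $\Theta(1)$, and even $S$ slightly super-quadratic gives only $o(1)$ failure; to get the $1/poly(D)$ decay demanded uniformly across all the algorithm's guarantees one needs the extra slack, hence degree strictly larger than $3$.

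There is essentially no hard step here — this is a one-line second-moment / union-bound argument. The only thing to be slightly careful about is the independence and exact-uniformity claim: because the graph ${\cal G}$ is chosen uniformly at random among left-regular (left-degree one) bipartite graphs with $N$ left nodes and $S$ right nodes, each left node's single edge endpoint is independent and uniform over the $S$ right nodes, so in particular the restriction to the $D$ defective left nodes gives $D$ i.i.d. uniform draws from $[S]$; this is what licenses both the per-pair probability $1/S$ and the union bound. One could alternatively phrase it via linearity of expectation on the number of colliding pairs and Markov's inequality, but the union bound is cleanest. The remaining obligation, handled outside this lemma, is to note that conditioned on this good event the first stage reduces exactly to non-adaptive group testing to find $D$ defectives among $S$ items, whose performance is governed by the theorem of Section~\ref{sec:nonadaptive}, and the second stage is a direct invocation of GROTESQUE localization.
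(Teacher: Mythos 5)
Your proof is correct and coincides with the paper's own argument: the paper gives two proofs of this lemma, and its second (``alternative'') proof is exactly your calculation --- per-pair collision probability $1/S$, union bound over $\binom{D}{2} < D^2$ pairs, then $S = poly(D)$ of degree larger than $3$. (The paper's first proof instead computes the exact no-collision probability $\prod_{i=1}^{D-1}(1-i/S) \approx 1 - D^2/2S$, but this yields the same conclusion.)
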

\begin{proof}
There are at least two ways to prove the correctness of this lemma.

First, using an argument similar to that in the birthday paradox problem,
\begin{eqnarray*}
\lefteqn{\Pr(\mbox{No defective items have the same neighbor})}\\
 & = & \frac{{S \choose D}D!}{S^{D}}\\
 & = & \left(1-\frac{1}{S}\right)\left(1-\frac{2}{S}\right)\ldots\left(1-\frac{D-1}{S}\right)\\
 & = & \prod_{i=1}^{D-1}\left(1-\frac{i}{S}\right)
\end{eqnarray*}

Using $1-\frac{i}{S}\approx e^{-i/S}$ when $i\ll S$,

\begin{eqnarray*}
\prod_{i=1}^{D-1}\left(1-\frac{i}{S}\right) & \approx & \prod_{i=1}^{D-1}e^{-i/S}\\
 & = & e^{-\sum_{i=1}^{D-1}i/S}\\
 & = & e^{-D(D-1)/2S}\\
 & \approx & e^{-D^{2}/2S}\\
 & \approx & 1-\frac{D^{2}}{2S}
\end{eqnarray*}

Therefore, if we choose $S={\cal O}(poly(D))$ with degree larger
than $3$, the probability that each right node has no more than two
defective items is $1-{\cal O}(1/poly(D))$.

For an alternative proof, we consider the probability that the event considered in the statement of this lemma does not happen.

The probability $\Pr(\mbox{Any two defective items have the same neighbor})$ equals $\frac{1}{S}$.
Then, by the union bound, the probability that there exist two defective items that have the same neighbor is at most $\frac{{D \choose 2}}{S}<\frac{D^{2}}{S}$.
Again, we choose $S={\cal O}(poly(D))$ with degree larger than $3$
to complete the proof.
\hfill \end{proof}

\noindent $a)$ {\it Number of tests:}

%\subsubsection{Number of tests}

The number of tests in the first stage is ${\cal O}(D\log D\log S)={\cal O}(D\log^{2}D)$
and the number of tests in the second stage is ${\cal O}(D\log N)$.
Overall, the number of tests required is ${\cal O}(D(\log N+\log^{2}D))$.

\noindent $b)$ {\it Decoding complexity:}

%\subsubsection{Decoding complexity}

The decoding complexity in the first stage is ${\cal O}(D\log S+D\log^{2}D)={\cal O}(D\log^{2}D)$
and the decoding complexity in the second stage is ${\cal O}(D\log N)$.
Overall, the decoding complexity is ${\cal O}(D(\log N+\log^{2}D))$.

\noindent $c)$ {\it Error probability:}

%\subsubsection{Correctness}

There are three events we need to consider.

First, the error probability of constructing bipartite graph ${\cal G}$
with the properties we need is ${\cal O}(1/poly(D))$.

Second, the error probability of non-adaptive group testing algorithm
is ${\cal O}(1/poly(D))$.

Third, the error probability of any localization testing is ${\cal O}(1/N)$.
By applying union bound over $D$ ${\cal D}$-leaf nodes, the probability
of incorrect decoding is ${\cal O}(1/poly(D))$.

Therefore, the overall error probability of incorrect decoding scales as
${\cal O}(1/poly(D))$.

\section{Conclusion}\label{sec:conclusion}
In this work we consider three group testing algorithms, specifically for adaptive, nonadaptive, and two-stage adaptive scenarios. In each of these scenarios, we present the first algorithms whose computational complexity is nearly information-theoretically order-optimal. The number of tests required in our algorithms is also nearly information-theoretically order-optimal (by the same factor).

\section{Appendix}
\subsection{Mcdiarmid's Inequality}
Let $X_{1},\ldots,X_{m}$ be independent random variables all taking
values in the set ${\cal X}$. Further, let $f:{\cal X}^{m}\mapsto\mathbf{R}$
be a function of $X_{1},\ldots,X_{m}$ that satisfies $\forall i$,
$\forall x_{1},\ldots,x_{m},$ $x'_{i}\in{\cal X}$,

\begin{eqnarray*}
\left|f\left(x_{1},\ldots,x_{i},\ldots,x_{m}\right)-f\left(x_{1},\ldots,x'_{i},\ldots,x_{m}\right)\right| & \leq & c_{i}
\end{eqnarray*}

Then for all $\epsilon>0$,

\begin{eqnarray*}
\Pr\left(f-{\bf E}\left[f\right]\geq\epsilon\right) & \leq & \exp\left(\frac{-2\epsilon^{2}}{\Sigma_{i=1}^{m}c_{i}^{2}}\right)
\end{eqnarray*}
and
\begin{eqnarray*}
\Pr\left(f-{\bf E}\left[f\right]\leq-\epsilon\right) & \leq & \exp\left(\frac{-2\epsilon^{2}}{\Sigma_{i=1}^{m}c_{i}^{2}}\right)
\end{eqnarray*}
%\section{Appendix}
%\sh{Not sure whether we need this lemma}
%\begin{lemma}
%$\frac{1}{2\ln2}\leq f(q)=\frac{1-H(q)}{(1-2q)^{2}}\leq1$, where
%$H(\cdot)$ is the binary entropy function and $q\in[0,1]$.
%\end{lemma}
%\begin{proof}
%By the Taylor Expansion,
%\[
%1-H(q)=\frac{2(q-1/2)^{2}}{\ln2}+\frac{4(q-1/2)^{4}}{3\ln2}+\frac{32(q-1/2)^{6}}{15\ln2}+\ldots
%\]
%
%
%So,
%\[
%\frac{1-H(q)}{(1-2q)^{2}}=\frac{1}{2\ln2}+\frac{(q-1/2)^{2}}{3\ln2}+...
%\]
%
%
%Therefore,
%\[
%\frac{1}{2\ln2}\leq\frac{1-H(q)}{(1-2q)^{2}}
%\]
%
%
%Also, it's easy to know that $f(q)$ is decreasing for $0<q\leq\frac{1}{2}$
%and increasing for $\frac{1}{2}<q<1$. And $f(0)=f(1)=1$.
%
%Hence,
%\[
%\frac{1}{2\ln2}\leq\frac{1-H(q)}{(1-2q)^{2}}\leq1
%\]
%\hfill \end{proof}
%

\bibliographystyle{IEEEtran}
\bibliography{mayank,sid,mohammad,sheng}

\end{document}